\numberwithin{equation}{section}
\newtheorem{proposition}{Proposition}[section]
\newtheorem{theorem}[proposition]{Theorem}
\newtheorem{lemma}[proposition]{Lemma}
\newcommand{\G}[1]{\Gamma\left(#1\right)}
\theoremstyle{definition}
\newtheorem*{ack}{Acknowledgement}
\begin{document}
     
\addtolength{\baselineskip}{0.3\baselineskip}
     
\pagestyle{empty}

\hfill

     \vspace{0.5cm}
     
     \begin{center}

     {\Large \bf Markov branching in the \\ vertex splitting model }

     \medskip
     \vspace{0.8 truecm}
     
     {\large \bf \today}
     
     \vspace{0.8truecm}

{\bf Sigurdur Örn Stefánsson}

\vspace{0.4 truecm}

{NORDITA,

Roslagstullsbacken 23, SE-106 91 Stockholm,

Sweden

}

\vspace{0.4 truecm}
\texttt{
sigste@nordita.org
}
\vspace{.3 truecm}

\end{center}
  
\small \noindent {\bf Abstract.} We study a special case of the vertex splitting model which is a recent model of randomly growing trees. For any finite maximum vertex degree $D$, we find a one parameter model, with parameter $\alpha \in [0,1]$ which has a so--called Markov branching property. When $D=\infty$ we find a two parameter model with an additional parameter $\gamma \in [0,1]$ which also has this feature. In the case $D = 3$, the model bears resemblance to Ford's $\alpha$--model of phylogenetic trees and when $D=\infty$ it is similar to its generalization, the $\alpha\gamma$--model. For $\alpha = 0$, the model reduces to the well known model of preferential attachment.  

In the case $\alpha > 0$, we prove convergence of the finite volume probability measures, generated by the growth rules, to a measure on infinite trees which is concentrated on the set of trees with a single spine. We show that the annealed Hausdorff dimension with respect to the infinite volume measure is $1/\alpha$. When $\gamma = 0$ the model reduces to a model of growing caterpillar graphs in which case we prove that the Hausdorff dimension is almost surely $1/\alpha$ and that the spectral dimension is almost surely $2/(1+\alpha)$. We comment briefly on the distribution of vertex degrees and correlations between degrees of neighbouring vertices.

\normalsize
  
\newpage
\pagestyle{plain}
    
\section {Introduction}

Random trees are an important tool in many branches of science, ranging from quantum gravity models \cite{specCDT,legalllimit} to biological applications \cite{aldous,RNAfolding}, to name a few. In this paper we introduce and study a new model of randomly growing, rooted, planar trees which we refer to as the attachment and grafting model, or ag--model for short. It is a special case of the vertex splitting model, recently introduced in \cite{vs}. The vertex splitting model is a modification of a model of growing trees, encountered in the theory of random RNA folding \cite{RNAfolding}.   

The ag--model is described informally below and a more detailed description is given in Section \ref{s:rpt}.  The root of the tree is simply a marked vertex of degree one and the planarity condition means that edges are ordered around vertices. The parameters of the model are $\alpha,\gamma \in [0,1]$ and $D \in \{3,4,\ldots\}\cup \{\infty\}$ denotes the  maximum degree of vertices in the trees.  When $D < \infty$, $\alpha$ and $D$ are the only active parameters of the model but when $D = \infty$, $\gamma$ also plays a role. Define
\begin {equation} \label {aandb}
 a = \left\{ \begin {array}{cc}
 -\frac{1-\alpha}{D-2}  &  \quad \text{if } D< \infty \\
 1-\gamma  & \quad \text{if } D= \infty \\
 	\end {array}\right. , \qquad 
\end {equation}
and
\begin {equation}
 \overline{w}_k = \left\{ \begin {array}{cc}
 a k + 1 - 2a - \alpha & \qquad \text{for } k \leq D \\
 0 & \qquad \text{for } k > D.\\
 	\end {array}\right. 
\end {equation}
The growth rules can be explained as follows. Call the edges which are adjacent to vertices of degree one (besides the root) {\it leaves} and call the other edges {\it internal edges}. In each discrete time step a new edge is added by randomly selecting
\begin {list}{\labelitemi}{\leftmargin=2em}
 \item [(a)] a vertex of degree $k\geq 2$ with relative probability $\overline{w}_k$ and attaching a new edge to it (the $k$ possibilities of attaching chosen uniformly at random) or 
\item [(b)]  an inner edge with relative probability $\alpha$ and dividing it into two edges by grafting a vertex to it or 
\item[(c)] a leaf with relative probability $1-a$ and dividing it into two edges by grafting a vertex to it,
\end {list}
see Fig.~\ref{f:growth} (left). It is from these two operations 'attachment' and 'grafting' which the model gets its name.
\begin{figure} [!h]
\centerline{\scalebox{0.24}{\includegraphics{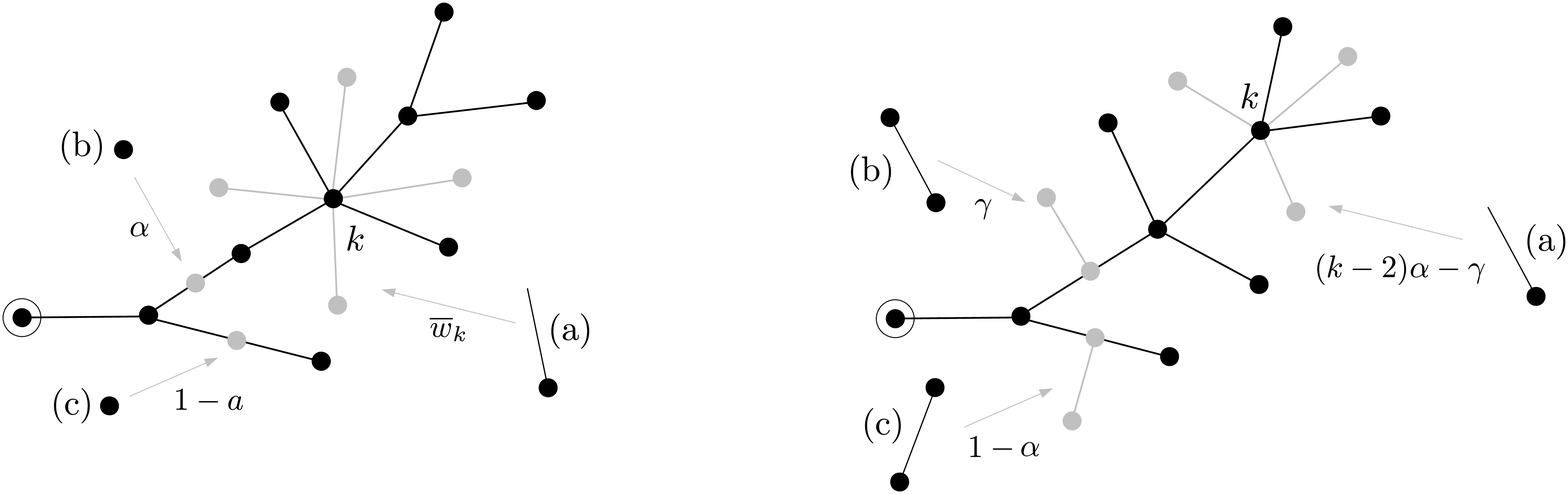}}}
\caption{Growth rules of the ag--model (left) and the $\alpha\gamma$--model (right). The root is indicated by a cirlced vertex.} \label{f:growth}
\end{figure}
When $\alpha = 0$ there is no grafting and the model reduces to the model of preferential attachment (see e.g.~\cite{albert0,massdist}) with linear attachment kernel $\overline{w}_k$. When $\alpha = 1$ there is no attaching and we simply have a growing linear graph.

The ag--model closely resembles the $\alpha\gamma$--model which was introduced in \cite{alphagamma}. In the $\alpha\gamma$--model a new leaf is added in each time step by randomly selecting 
\begin {list}{\labelitemi}{\leftmargin=2em}
 \item [(a)] a vertex of degree $k\geq 3$ with relative probability $(k-2)\alpha -\gamma$ and attaching a new edge to it or 
\item [(b)]  an inner edge with relative probability $\gamma$ and grafting a leaf to it or 
\item[(c)] a leaf with relative probability $1-\alpha$ and grafting a leaf to it
\end {list}
where $0\leq \gamma \leq \alpha \leq 1$, see Fig.~\ref{f:growth} (right). In the case $\gamma = \alpha$, this model reduces to Ford's $\alpha$--model of growing binary trees  \cite{Ford}  in which case it resembles the ag--model with $D=3$. When $\gamma<\alpha$ it is similar to the ag--model with $D=\infty$. 

The $\alpha\gamma$--model and the ag--model both have a property referred to as Markov branching which was introduced by Aldous in \cite{aldous}. This means, crudely, that the subtrees below a given vertex have the same distribution as the whole tree, see Section \ref{s:mb}. This feature makes the models much simpler to treat, since one can easily write recursion equations for many observables. Furthermore, recent results by Haas and Miermont provide a recipe for taking the scaling limit of such models \cite{markovss}.

The main results of this paper are the following. For $\alpha >0$, as the size of the trees goes to infinity, the measure concentrates on the set of trees with exactly one non-backtracking path from the root to infinity, referred to as an {\it infinite spine}.  The emergence of a unique infinite spine is known in other models of random trees, an example being the uniform planar tree and modifications of it, see e.g. ~\cite{sdgt}. Similar effects are also observed in triangulation models in quantum gravity, where exactly one large ``universe'' appears with finite baby-universes attached, see e.g.~\cite{krikun}. We also establish that the average volume of a graph ball of radius $R$ in the infinite trees grows like $R^{1/\alpha}$. The exponent is referred to as the {\it Hausdorff dimension} and denoted by $d_H$. This power law behaviour is interesting since it is often the case that models of growing trees exhibit an exponential volume growth. This is e.g.~the case in the preferential attachment model \cite{albert0} which in fact corresponds to the case $\alpha = 0$ as was noted before. Furthermore, since $\alpha \in [0,1]$, the full range of exponents, $d_H$, is realized and the ag--model is one of few known natural tree models having this feature.

\subsection{Relation to the vertex splitting model}  We now briefly introduce the vertex splitting model and show how the ag--model can be seen as a special case. The parameters of the vertex splitting model are given by a set of non--negative weights $w_{i,j}$, with $1 \leq i,j,i+j-2 \leq D$, where $D\geq 2$ (or $D=\infty$) is a fixed number which denotes the maximum vertex degree in the trees. These weights are referred to as {\it partitioning weights} and the so--called {\it splitting weights} are defined by
\begin{equation}
 w_i = \frac{i}{2} \sum_{j=1}^{i+1} w_{j,i+2-j},\qquad 1 \leq i \leq D.
\end{equation}
Starting from a fixed finite planar tree, in each discrete time step a new edge is added as follows.
\begin {list}{\labelitemi}{\leftmargin=2em}
 \item [(a)] Select a given vertex $v$ of degree $i$ with relative probability $w_i$.
\item [(b)] Randomly partition the edges which contain $v$ into two disjoint sets of adjacent edges:  $V$ of size $k-1$ and $V'$ of size $i-(k-1)$, with probability $w_{k,i+2-k}/w_i$. For a given $k$, all such partitionings are taken to be equally likely.
\item [(c)] Move all edges in $V'$ from $v$ to a new vertex $v'$ and join $v$ and $v'$ by a new edge.
\end {list}
We allow a small generalization of the above growth rule: single out a vertex of degree one in the initial tree and call it the root and modify (a) in such a way that the root is selected with relative probability $w_r \geq 0$. Each time the root is split we define the new vertex of degree one to be the root. The vertex splitting model has very general growth rules and it includes many other models of random trees as special cases or limiting cases, see \cite{vs,phdsos} for more detailed discussion.

The  ag--model can be recovered from the vertex splitting model by assigning the weight $w_r = \alpha/2$ to splitting the root and choosing the nonzero partitioning weights as follows
\begin {eqnarray} \nonumber
 w_{2,2} &=& \alpha, \\ \nonumber
w_{k,2} &=& \frac{\alpha}{2}, \quad 3\leq k \leq D, \\ \label{partw}
w_{k+1,1} &=& \frac{\overline{w}_{k}}{k}, \quad 2 \leq k \leq D.
\end {eqnarray}
The splitting weights are then
\begin {equation} \label{splittingweights}
 w_k = \left(\frac{\alpha}{2} + a\right) k + 1 - 2a - \alpha, \quad 2 \leq k \leq D
\end {equation}
and $w_k = 0$ if $k>D$. Note that in the case $D=\infty$ and $\gamma < \alpha/2$, the weight $w_1$ is negative. We will however include this case in the ag--model since the total weight of any transition is still positive. 

A similar relationship between the $\alpha\gamma$--model and the vertex splitting model was discussed in \cite{vs}. However, in that case one needs to take $w_{3,1} = \infty$ which means that comparison of results in the two models is not necessarily reliable. The ag--model is therefore more interesting as a special case and due to its simplicity, yet non--triviality, it serves as a good testing ground for non--rigorous results obtained in the vertex splitting model.

\subsection{Outline} The paper is organized as follows. In Section \ref{s:rpt} we define rooted planar trees and introduce a convenient notation for representing random trees. Thereafter we give a proper definition of the ag--model which was described informally above. In Section \ref{s:mb} we show that the model has the Markov branching property and we calculate its first split distribution. In Section \ref{s:cfvm} we show, using methods from \cite{siggi}, that the finite volume probability measures generated by the random growth operation, converge to a measure on the set of infinite trees. Furthermore, we characterize the infinite volume measure. In Section \ref{s:hd} we calculate the annealed Hausdorff dimension with respect to the infinite volume measure and in a certain special case, we calculate the almost sure Hausdorff and spectral dimensions. The results we obtain, support certain scaling assumptions which were made in the vertex splitting model.  We conclude by commenting on the distribution of the degrees of vertices in the trees and correlations between degrees of neighbouring vertices by recalling results from \cite{vs}. In order to improve readability, proofs of theorems and lemmas are in most cases collected in Appendix B. 
  
\section {Random planar trees} \label{s:rpt}
In this section we begin by defining the set of rooted, planar trees and endow it with a metric. Then we define a convenient notation for representing random trees and introduce the model which will be studied in the paper.

Start with a tree graph $\tau$ which has vertices of finite or countably infinite degree and at least one vertex of degree one. By convention we define the root $r$ of $\tau$ to be a vertex of degree one and we label the unique nearest neighbour of the root by $(1)$. The rest of the vertices are labeled in the following recursive way. The children of a given vertex in the tree (apart from $r$) with label $(\ell)$ are labeled with sequences $(\ell,1), (\ell, 2),\ldots$, see Fig.~\ref{f:planartree}. A rooted planar tree is a tree $\tau$ along with such a lexicographical labeling. From here on, we will always work with rooted, planar trees unless otherwise stated and will simply refer to them as trees. We denote the set of trees with $n$ edges by $\mathcal{T}_n$ and the set of all trees, finite and infinite, by $\mathcal{T}$.
\begin{figure} [!h]
\centerline{\scalebox{0.8}{\includegraphics{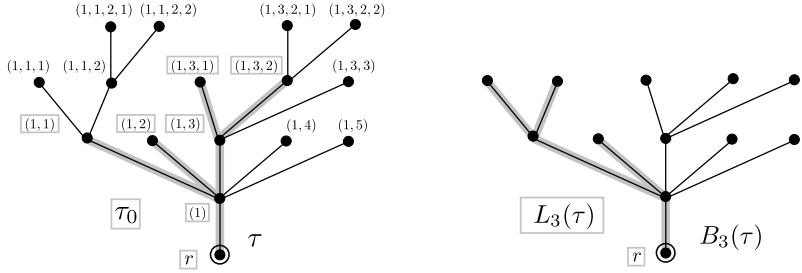}}}
\caption{Left: An example of a rooted, planar tree $\tau$ and a left subtree $\tau_0$ (boxed in gray). Right: The graph ball $B_3(\tau)$ and the left ball $L_3(\tau)$ (boxed in gray). The root is indicated by a circled vertex.} \label{f:planartree}
\end{figure}

A tree $\tau_0$ is said to be a {\it left subtree} of $\tau$ if it is a connected subtree of $\tau$ which contains $r$ and has the properties that if it contains a vertex with label $(\ell,k)$ then it contains all vertices with labels $(\ell,i)$ with $i\leq k$, see Fig.~\ref{f:planartree}. Let $B_R(\tau)$ be the graph ball of radius $R$ centered on the root of $\tau$. We define the {\it left ball} of radius $R$, $L_R(\tau)$, as the maximal left subtree of $B_R(\tau)$ with vertices of degree no greater than $R$, see Fig.~\ref{f:planartree}.  A metric $d$ is defined on $\mathcal{T}$ by
\begin {equation}
d(\tau,\tau') = \inf\left\{\frac{1}{R}~\Bigg| ~ L_R(\tau) = L_R(\tau')\right\}.
\end {equation}
The metric $d$ was first introduced in \cite{ngtrees} and we refer to this paper for some properties of the metric space $(\mathcal{T},d)$.

Define the root joining operation $\ast$ in the following way. Given trees $\tau_1, \tau_2, \ldots, \tau_{k}$, $k\geq 1$, let $\tau = \emptyset \ast \tau_1\ast \tau_2 \ast \cdots \ast \tau_k$ be the tree obtained by (I) identifying the roots of $\tau_1, \tau_2,\ldots,\tau_k$ and labeling them by (1), (II) replacing the first element '$1$' of each label in $\tau_j$ by '$1,j$', $1\leq j \leq k$, and (III) connecting a new root $r$ to the vertex (1). If $k>1$ we may omit the $\emptyset$ symbol, see Fig.~\ref{F_split_0}. Note that in general
\begin {equation*}
 \tau_1\ast\tau_2 \ast \cdots \ast \tau_k \neq  \tau_{\sigma(1)}\ast\tau_{\sigma(2)} \ast \cdots \ast \tau_{\sigma(k)} 
\end {equation*}
for a permutation $\sigma$ of $\{1,2,\ldots,k\}$.
\begin{figure} [!h]
\centerline{\scalebox{0.6}{\includegraphics{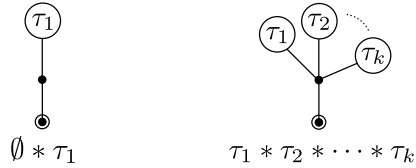}}}
\caption{The root joining operation.} \label{F_split_0}
\end{figure}

Let $\pi_n$ be a probability distribution on $\mathcal{T}_n$. We define a random tree $T_n$ by the canonical probability generating function
\begin {equation} \label{rt}
 T_n = \sum_{\tau \in \mathcal{T}_n} \pi_n(\tau) \tau.
\end {equation}
The above sum of trees and multiplication of trees by a scalar are formal and provide a convenient way of storing information on the probability measure $\pi_n$. 
\subsection{The ag--model}
Using the notation introduced above, we now define the ag--model which was described informally in the introduction. Let
\begin {equation}
 W(n) = n-a.
\end {equation}
 We introduce a growth operation $I$ in the following recursive way. Let $s$ be the single edge tree and define $I(s) = \emptyset \ast s$. For a tree $\tau = \emptyset \ast \tau_1 \ast \cdots \ast \tau_{k-1}$ define the random tree
\begin {eqnarray} \nonumber
 I(\tau) &=& \frac{1}{W(|\tau|)}  \Bigg( \sum_{i=1}^{k-1} W(|\tau_i|) ~\emptyset \ast \tau_1 \ast \cdots \ast I(\tau_i) \ast \cdots \ast \tau_{k-1} \\ \nonumber
&& +~ \alpha ~\emptyset \ast \tau + \frac{\overline{w}_k}{
k} \sum_{i=1}^k \tau_1 \ast \cdots \ast \tau_{i-1}\ast s \ast \tau_{i} \ast \cdots \ast \tau_{k-1} \Bigg) \\
\end {eqnarray}
where $|\tau|$ denotes the number of edges in $\tau$. The growth operation $I$ is equivalent to the growth rule which was described informally in Fig.~\ref{f:growth} (left) in the introduction. The ag--model is defined recursively as the random tree $P_{n}$ which satisfies $P_{1} = s$ and 
\begin{equation} \label{themodel}
P_{n} = I(P_{n-1}).
\end {equation}
We denote the probability measure on $\mathcal{T}_n$,  generated by this growth process by $\nu_{n}$.

\section {Markov branching} \label{s:mb}
A sequence of random trees $(T_n)_{n \geq 1}$  is said to satisfy a Markov branching property, or to be Markovian self--similar, if there exist functions $q_k(n_1,\ldots,n_{k-1})$, $k\geq 2$ such that for all $n \geq 2$
\begin {equation} \label{mbproperty}
 T_{n} = \sum_{k=2}^\infty \sum_{n_1+\cdots+n_{k-1} = n-1} q_k(n_1,\ldots,n_{k-1})~\emptyset \ast T_{n_1} \ast \cdots \ast T_{n_{k-1}}.
\end {equation}The functions $q_k(n_1,\ldots,n_{k-1})$ are referred to as the {\it first split distribution} of $(T_n)_{n\geq 1}$. We use the convention that $q_k(n_1,\ldots,n_{k-1}) = 0$ if any of the arguments $n_1,\ldots,n_{k-1}$ equals zero.
\begin {proposition} \label{p:MB1}
 The random trees $(P_{n})_{n\geq 1}$, defined by (\ref{themodel}), have the Markov branching property with a first split distribution which satisfies $q_2(1) = 1$,
\begin {equation}
 q_2(n) = \frac{1}{W(n)}\left(W(n-1) q_2(n-1) +\alpha\right),
\end {equation}
\begin {eqnarray} \nonumber
 q_k(n_1,\ldots,n_{k-1}) &=& \frac{1}{W(n)} \sum_{i=1}^{k-1}\Bigg(W(n_i-1) q_k(n_1,\ldots,n_i-1,\ldots,n_{k-1}) \\ \nonumber
&& + ~ \frac{\overline{w}_{k-1}}{k-1}  \delta_{n_i,1} q_{k-1}(n_1,\ldots,n_{i-1},n_{i+1},\ldots,n_{k-1})\Bigg) \\ \label{qrec2}
\end {eqnarray}
for $3 \leq k \leq D$ and
\begin {equation}
q_{D+1}(n_1,\ldots,n_D) = 0
\end {equation}
where $n_1 + \cdots + n_{k-1} = n$.
\end {proposition}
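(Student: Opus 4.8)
The plan is to argue by induction on $n$, using that the growth operation $I$ extends $\mathbb{R}$-linearly to formal sums of trees, so that $I\big(\sum_\tau c_\tau\tau\big)=\sum_\tau c_\tau I(\tau)$; in particular $P_n=I(P_{n-1})$ and, self-referentially, $I(P_m)=P_{m+1}$ for every $m\ge 1$. I will take the displayed recursions together with $q_2(1)=1$ as the \emph{definition} of the numbers $q_k(n_1,\dots,n_{k-1})$ for arguments $\ge 1$ (with the convention $q_k(\dots,0,\dots)=0$): since each recursion expresses $q_k(n_1,\dots,n_{k-1})$ through values $q_j(m_1,\dots,m_{j-1})$ with $j\le k$ and $\sum_i m_i=\sum_i n_i-1$, this is a legitimate definition by induction on $\sum_i n_i$. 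It then suffices to prove, by induction on $n\ge 2$, that
\[
P_n=\sum_{k=2}^{D}\ \sum_{n_1+\cdots+n_{k-1}=n-1}q_k(n_1,\dots,n_{k-1})\,\emptyset\ast P_{n_1}\ast\cdots\ast P_{n_{k-1}},
\]
and in addition that $P_n$ is supported on trees in which the vertex $(1)$ has degree at most $D$, i.e. with at most $D-1$ root--subtrees; the latter yields $q_k\equiv 0$ for $k>D$, in particular $q_{D+1}\equiv 0$. The base case $n=2$ is forced: $P_2=I(P_1)=I(s)=\emptyset\ast s=q_2(1)\,\emptyset\ast P_1$, whence $q_2(1)=1$.

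For the inductive step $n\ge 3$, substitute the inductive hypothesis for $P_{n-1}$ into $P_n=I(P_{n-1})$ and expand by linearity. Take a hypothesis summand $\emptyset\ast P_{m_1}\ast\cdots\ast P_{m_{\ell-1}}$ with $\sum_i m_i=n-2$ (so it has $n-1$ edges and $I$ is normalised by $W(n-1)$). Applying the definition of $I$ --- and using that $W(|\tau_i|)$ depends on $\tau_i\in\mathcal T_{m_i}$ only through $m_i$, so the inner sum over $\tau_i$ collapses and turns the factor $P_{m_i}$ into $I(P_{m_i})=P_{m_i+1}$ --- yields three groups of terms: (i) descent terms $\tfrac{W(m_i)}{W(n-1)}\,\emptyset\ast P_{m_1}\ast\cdots\ast P_{m_i+1}\ast\cdots\ast P_{m_{\ell-1}}$, $1\le i\le \ell-1$, still with $\ell-1$ root--subtrees; (ii) the grafting term $\tfrac{\alpha}{W(n-1)}\,\emptyset\ast\big(\emptyset\ast P_{m_1}\ast\cdots\ast P_{m_{\ell-1}}\big)$, with a single root--subtree of size $n-1$; (iii) attachment terms $\tfrac{\overline w_\ell}{\ell\,W(n-1)}\sum_{i=1}^{\ell}\emptyset\ast P_{m_1}\ast\cdots\ast s\ast\cdots\ast P_{m_{\ell-1}}$, with $s=P_1$ inserted in the $i$-th gap at $(1)$, now with $\ell$ root--subtrees.

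Now collect, in $P_n$, the coefficient $q_k(n_1,\dots,n_{k-1})$ of a fixed target $\emptyset\ast P_{n_1}\ast\cdots\ast P_{n_{k-1}}$ with $\sum_i n_i=n-1$. Group (i), taken with $\ell=k$, contributes it from the hypothesis term with $(m_1,\dots,m_{k-1})=(n_1,\dots,n_i-1,\dots,n_{k-1})$, giving $\tfrac1{W(n-1)}\sum_{i=1}^{k-1}W(n_i-1)\,q_k(n_1,\dots,n_i-1,\dots,n_{k-1})$ (summands with $n_i-1=0$ drop out by convention). Group (ii), summed over the whole hypothesis, equals $\tfrac{\alpha}{W(n-1)}\,\emptyset\ast P_{n-1}$, hence feeds only $q_2(n-1)$, by $\tfrac{\alpha}{W(n-1)}$. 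Group (iii), taken with $\ell=k-1$, reaches the target precisely when some coordinate $n_i$ equals $1$ and the hypothesis term is the one obtained by deleting $n_i$ from $(n_1,\dots,n_{k-1})$, contributing $\tfrac1{W(n-1)}\sum_{i=1}^{k-1}\delta_{n_i,1}\,\tfrac{\overline w_{k-1}}{k-1}\,q_{k-1}(n_1,\dots,\widehat{n_i},\dots,n_{k-1})$. Since the normalisation $1/W(n_1+\cdots+n_{k-1})$ appearing in the stated recursions equals $1/W(n-1)$, adding the three contributions reproduces those recursions verbatim, the $k=2$ row being groups (i) and (ii) combined, $q_2(n)=\tfrac1{W(n)}(W(n-1)q_2(n-1)+\alpha)$. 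Finally, since $P_{n-1}$ has at most $D-1$ root--subtrees by hypothesis, only group (iii) could raise this to $D$, and the weight of that is proportional to $\overline w_D=aD+1-2a-\alpha=a(D-2)+1-\alpha$, which vanishes when $D<\infty$ by the definition (\ref{aandb}) of $a$ (and is irrelevant when $D=\infty$); hence $P_n$ again has at most $D-1$ root--subtrees and the induction closes.

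The genuinely non--routine points are the planar bookkeeping inside group (iii) --- checking that inserting the single edge $s$ into the $i$-th gap at the vertex $(1)$ of $\emptyset\ast P_{m_1}\ast\cdots\ast P_{m_{\ell-1}}$ produces exactly the target with $n_i=1$ in position $i$ and the remaining sizes in order, so that the combinatorial count matches the factor $\delta_{n_i,1}$ and the truncated argument list of $q_{k-1}$ --- together with the two boundary subtleties, the $k=2$ row (where group (iii) is absent, so the two other groups must be combined) and the degree cutoff (where one invokes $\overline w_D=0$). All remaining manipulations are formal substitution and collection of coefficients, which is unambiguous because the root--joining decomposition of a tree into $\emptyset\ast\tau_1\ast\cdots\ast\tau_{k-1}$ is unique.
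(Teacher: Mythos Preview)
Your proof is correct and follows essentially the same induction on $n$ as the paper: apply $I$ to the Markov branching decomposition of $P_{n-1}$, expand via the three groups (descent, grafting, attachment), and collect coefficients. The only cosmetic difference is that you take the recursions as the definition of $q_k$ and verify the decomposition, whereas the paper assumes the decomposition and reads off the recursions; you also make the cutoff $q_{D+1}\equiv 0$ explicit via $\overline w_D=0$, which the paper leaves implicit.
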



\begin{proof}
We use induction on $n$. $P_n$ clearly satisfies (\ref{mbproperty}) for $n=2$. Assume it satisfies (\ref{mbproperty}) for some $n$. Then
\begin {eqnarray*}
 P_{n+1} &=&  \sum_{k=2}^\infty \sum_{n_1+\cdots+n_{k-1} = n-1} q_{k}(n_1,\ldots,n_{k-1}) \frac{1}{W(n)} \Bigg({\alpha}~ \emptyset \ast(\emptyset\ast P_{n_1}\ast \cdots \ast P_{n_{k-1}})\\
&& +~ \frac{\overline{w}_k}{k} \sum_{i=1}^k  P_{n_{1}} \ast \cdots \ast P_{n_{i-1}}\ast s \ast P_{n_{i}} \ast \cdots \ast P_{n_{k-1}}  \\
&& +~\sum_{i=1}^{k-1} {W(n_i)} ~\emptyset \ast P_{n_1} \ast \cdots \ast I(P_{n_i}) \ast \cdots \ast P_{n_{k-1}}\Bigg)\\
&=& \frac{1}{W(n)}\left(\alpha + W(n-1)q_2(n-1)\right) \emptyset\ast P_n \\
&&~+ \sum_{k=3}^\infty \sum_{n_1+\cdots+n_{k-1} = n}\frac{1}{W(n)}\sum_{i=1}^{k-1} \Bigg(\frac{\overline{w}_{k-1}}{k-1} \delta_{n_i,1}q_{k-1}(n_1,\ldots,n_{i-1},n_{i+1},\ldots,n_{k-1}) \\
&& +~ W(n_i-1) q_{k}(n_1,\ldots,n_i-1,\ldots,n_{k-1})\Bigg) \emptyset\ast P_{n_1} \ast \cdots \ast P_{n_{k-1}}.\\
\end {eqnarray*}
This shows that (\ref{mbproperty}) also holds for $n+1$ and we conclude that it holds for all $n\geq 2$.
 \end {proof}

The recursions for the first split distribution in Proposition \ref{p:MB1} can be solved with straightforward methods. We state the result in the following proposition which can easily be proved by induction. The method for finding the solution is described in Appendix A.
\begin {proposition}
 The first split distribution of the sequence $(P_n)_{n\geq1}$ is given by
\begin {eqnarray} \nonumber
 q_k(n_1,\ldots,n_{k-1}) &=& \frac{ \Gamma\left(k-2+\frac{1-\alpha}{a}\right)}{\Gamma\left(\frac{1-\alpha}{a}\right)\Gamma(k)} \frac{\G{1-a}\G{n+1}}{a\Gamma\left(n+1-a\right)}\prod_{i=1}^{k-1} \frac{a\Gamma(n_i-a)}{\G{1-a}\G{n_i+1}}  \\ 
&& \times ~\left(1-a - \alpha + \alpha \sum_{i=1}^{k-1}\frac{n_i}{n+1-n_i}\right) \nonumber \\ \label{qfuncgengen}
\end {eqnarray}
where $n_1 + \cdots + n_{k-1} = n$.
\end {proposition}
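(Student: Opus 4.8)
The plan is to verify the closed form (\ref{qfuncgengen}) directly by induction on $n=n_1+\cdots+n_{k-1}$, using the recursions of Proposition~\ref{p:MB1}. Together with $q_2(1)=1$ and the convention $q_k(\ldots,0,\ldots)=0$, those recursions determine the numbers $q_k(n_1,\ldots,n_{k-1})$ uniquely, since the recursion for a tuple of total $n$ refers only to $q_k$ and $q_{k-1}$ at tuples of total $n-1$; hence it suffices to check that the right-hand side of (\ref{qfuncgengen}), which I denote $f_k(n_1,\ldots,n_{k-1})$, satisfies the same recursions and initial condition. It is convenient to write $f_k=C_k\,P_n\prod_{i=1}^{k-1}g(n_i)\cdot B$, where $C_k=\Gamma\!\left(k-2+\tfrac{1-\alpha}{a}\right)/\bigl(\Gamma\!\left(\tfrac{1-\alpha}{a}\right)\Gamma(k)\bigr)$, $P_n=\Gamma(1-a)\Gamma(n+1)/\bigl(a\,\Gamma(n+1-a)\bigr)$ depends only on $n$, $g(m)=a\,\Gamma(m-a)/\bigl(\Gamma(1-a)\Gamma(m+1)\bigr)$, and $B=B(n_1,\ldots,n_{k-1})=1-a-\alpha+\alpha\sum_{i=1}^{k-1}n_i/(n+1-n_i)$. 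The base case $n=1$ is the single tuple $(1)$ with $k=2$: using $\Gamma(2-a)=(1-a)\Gamma(1-a)$ one gets $C_2=1$, $P_1=1/(a(1-a))$, $g(1)=a$, $B=1-a$, hence $f_2(1)=1=q_2(1)$.

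For the inductive step I insert the inductive hypothesis into the recursions and reduce everything to one scalar identity by means of three elementary observations. First, since $W(m)=m-a$ and $\Gamma(m-a)=(m-1-a)\Gamma(m-1-a)$, one has $W(n_i-1)g(n_i-1)=n_i\,g(n_i)$; likewise $W(n)P_n=n\,P_{n-1}$. Second --- and here the combinatorial weights conspire --- $\overline{w}_{k-1}=a(k-3)+(1-\alpha)$, so that $C_k/C_{k-1}=\overline{w}_{k-1}/(a(k-1))$; combined with $g(1)=a$ this makes the boundary contribution $\tfrac{\overline{w}_{k-1}}{k-1}\,\delta_{n_i,1}\,q_{k-1}$ fit seamlessly into the $n_i=1$ slots of the main sum (equivalently, the formal value $g(0)=-1$ renders $W(0)f_k(\ldots,0,\ldots)$ equal to $\tfrac{\overline{w}_{k-1}}{k-1}f_{k-1}$ at the reduced tuple). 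After these cancellations the recursion for $q_k$, $k\ge 3$, is equivalent to
\[
 n\,B(n_1,\ldots,n_{k-1})=\sum_{i=1}^{k-1} n_i\,B(n_1,\ldots,n_i-1,\ldots,n_{k-1}),
\]
where the brackets on the right are evaluated at total $n-1$.

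This identity is proved by a short direct computation: the denominators transform as $(n-1)+1-n_j=n-n_j$ for $j\ne i$ and $=n+1-n_i$ in the $i$-th slot, so multiplying out and collecting the cross terms one uses $\sum_{i}\sum_{j\ne i}\tfrac{n_in_j}{n-n_j}=\sum_j\tfrac{n_j}{n-n_j}\sum_{i\ne j}n_i=\sum_j n_j=n$ together with $n_i(n_i-1-n)=-\,n_i(n+1-n_i)$ to collapse the right-hand side to $nB$. The case $k=2$ is checked the same way against its own recursion $q_2(n)=W(n)^{-1}\bigl(W(n-1)q_2(n-1)+\alpha\bigr)$, where $f_2(n)=(\alpha n+1-a-\alpha)/(n-a)$ makes the verification immediate. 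I expect the only genuine subtlety to be the bookkeeping in the inductive step --- lining up the $\delta_{n_i,1}$ boundary terms with the $n_i=1$ slots of the main sum via the relation for $C_k/C_{k-1}$ --- after which what remains is routine Gamma-function algebra and the elementary identity displayed above.
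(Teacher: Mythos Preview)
Your proposal is correct and follows exactly the approach the paper indicates: the paper states that the proposition ``can easily be proved by induction'' and leaves the details to the reader, reserving Appendix~A for a network--flow derivation of the formula rather than a verification. You have supplied precisely those inductive details, and your key observations --- the Gamma shifts $W(n_i-1)g(n_i-1)=n_i\,g(n_i)$ and $W(n)P_n=nP_{n-1}$, the ratio $C_k/C_{k-1}=\overline{w}_{k-1}/(a(k-1))$ which absorbs the $\delta_{n_i,1}$ boundary terms via the formal value $g(0)=-1$, and the scalar identity $nB=\sum_i n_i B(\ldots,n_i-1,\ldots)$ --- are all correct and constitute a complete proof.
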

We will repeatedly use the following standard, easily derived identities when we work with the above first split distribution \cite{abram}
\begin {equation} \label{sumofgamma}
 \sum_{n=1}^\infty \frac{c\Gamma(n-c)}{\G{1-c}\G{n+1}} z^{n} = 1-(1-z)^c
\end {equation}
and 
\begin {equation} \label{ordergamma}
 \frac{\Gamma(n-c)}{\G{n+1}} = n^{-c-1}\left(1+O\left(n^{-1}\right)\right).
\end {equation}

\section{Convergence of the finite volume measures} \label{s:cfvm}
In this section we show that the measures $\nu_n$ generated by the growth process converge weakly to a measure $\nu$ on the set of infinite trees. By weak convergence we mean that for all bounded functions $f$ which are continuous in the topology generated by the metric $d$
\begin {equation}
\int_{\mathcal{T}} f(\tau)d\nu_{n} \longrightarrow \int_{\mathcal{T}} f(\tau)d\nu, \quad\quad \text{as $n \longrightarrow \infty$.}
\end {equation}

 We will call an infinite non--backtracking path from the root, a {\it spine}. Let $\tau$ be a tree with exactly one spine and let $v$ be a vertex on the spine ($v\neq r$) with degree $k$. We call the $k-2$ finite subtrees of $\tau$ which are attached to the vertex $v$ {\it outgrowths} from the spine.

\begin {theorem} \label{th:conv}
Let $\alpha > 0$. The measures $\nu_{n}$, viewed as probability measures on $\mathcal{T}$, converge weakly, as $n \longrightarrow \infty$, to a probability measure $\nu$ which is concentrated on the set of trees that have exactly one spine. The degrees of the vertices on the spine are independently distributed by 
\begin {equation} \label{phiinfgen}
 \phi(k) =  \frac{\alpha \G{\frac{1+a}{a}}\G{k-2+\frac{1-\alpha}{a}}}{\G{\frac{1-\alpha}{a}}\G{k-1+\frac{1}{a}}}, \qquad  k \geq 2.
\end {equation}
The outgrowths from the spine are finite with probability one and outgrowths from different vertices are independently distributed. If a vertex $v$ on the spine has degree $k$ and $\tau_1,\ldots,\tau_{L}$ are the outgrowths from $v$ to the left of the spine (in that order) and $\tau_{L+1},\ldots,\tau_{k-2}$ are the outgrowths from $v$ to the right of the spine (in that order), then their joint distribution is 
\begin {equation} \label{muinfgen}
 \mu_k(\tau_1,\ldots,\tau_{k-2}) =   \frac{\Gamma\left(k-1+\frac{1}{a}\right)}{\Gamma\left(\frac{1+a}{a}\right)\Gamma(k-1)}\frac{1}{1+m}\prod_{i=1}^{k-2} \frac{a \Gamma(|\tau_i|-a)}{\G{1-a}\Gamma(|\tau_i|+1)}\nu_{|\tau_i|}(\tau_i), 
\end {equation}
 where $m = |\tau_1| + \cdots + |\tau_{k-2}| \geq k-2$ and $k\geq 3$. 
\end {theorem}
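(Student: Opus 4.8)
The plan is to follow the method of \cite{siggi}. Since the degree truncation in the definition of $L_R$ makes $(\mathcal T,d)$ compact, the family $(\nu_n)$ is automatically tight, so it suffices to check that every subsequential weak limit is one and the same measure $\nu$. As the cylinder sets $\{\tau\in\mathcal T:L_R(\tau)=L_R(\tau_0)\}$ ($\tau_0$ finite) are clopen and generate the topology, this will follow once we show that $\nu_n(\{L_R(\tau)=L_R(\tau_0)\})$ converges as $n\to\infty$ for every $\tau_0$ and $R$, with limits satisfying the natural consistency relation so that no mass escapes to infinity.

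To compute these probabilities we would use the Markov branching property. By (\ref{mbproperty}) and multilinearity of $\ast$, the $\nu_n$--probability of any event depending only on the first split is a sum of products of the $q_k$ against $\nu_{|\cdot|}$--weights, and iterating down a path from the root the $\nu_n$--probability of a finite ``backbone'' --- a finite non--backtracking path out of the root carrying prescribed finite outgrowths at its vertices, with one edge left open --- becomes a finite sum of products of the explicit $q_k$ of (\ref{qfuncgengen}) against the weights $\nu_{|\tau_i|}(\tau_i)$ of the outgrowths. The crucial step is the asymptotics of $q_k$ when a single argument is large: if $n_j=n-m$ while the other $k-2$ arguments $n_i$ stay bounded with $\sum_{i\ne j}n_i=m$, then by (\ref{ordergamma}) the factor $\G{n+1}/\G{n+1-a}$ contributes $n^{a}$, the $j$--th Gamma factor contributes $n^{-a-1}$, and the bracket $1-a-\alpha+\alpha\sum_i n_i/(n+1-n_i)$ is dominated by $\alpha(n-m)/(m+1)\sim \alpha n/(m+1)$, so the three powers of $n$ cancel exactly and
\begin{equation*}
q_k(n_1,\ldots,n_{k-1})\ \longrightarrow\ \frac{\alpha}{m+1}\,\frac{\G{k-2+\tfrac{1-\alpha}{a}}}{\G{\tfrac{1-\alpha}{a}}\,\Gamma(k)}\,\prod_{i\ne j}\frac{a\,\Gamma(n_i-a)}{\G{1-a}\,\Gamma(n_i+1)} .
\end{equation*}
This cancellation is the arithmetic heart of the statement and is where $\alpha>0$ is essential: if $\alpha=0$ the bracket is the constant $1-a$, $q_k=O(n^{-1})$, and no branch carries positive limiting mass. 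Summing the open branch over all trees of its residual size contributes total mass $1$, and feeding the displayed limit into the backbone probability produces, in the limit, a product over the spine vertices; the product structure of $q_k$ over its arguments will yield that the outgrowth collections at distinct spine vertices are independent and that the spine degrees are independent, while the coupling term leaves the non--factorising prefactor $1/(1+m)$ of (\ref{muinfgen}). Identifying the resulting constants with the Gamma identities (\ref{sumofgamma})--(\ref{ordergamma}) is the routine bookkeeping that produces (\ref{phiinfgen}) and (\ref{muinfgen}).

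The hard part will be showing that no mass is lost, i.e.\ that $\nu$ is a probability measure concentrated on single--spine trees with a.s.\ finite outgrowths. What is needed is a bound, uniform in $n$, on the $\nu_n$--probability that the first split --- or, up to depth $R$, some split along the emerging spine --- has either two ``large'' branches or an outgrowth of size $\ge C$, and this bound must tend to $0$ as $C\to\infty$ uniformly in $n$. This reduces again to summing the $q_k$--asymptotics: the two--large--branches terms are $O(n^{-a-2})$ individually and sum to $o(1)$, and the large--outgrowth terms decay like a negative power of the outgrowth size, with tail sum vanishing as $C\to\infty$. The genuinely delicate point, present only when $D=\infty$ where the number of branches $k$ is itself unbounded, will be controlling the sum over $k$; here one uses summability of $\phi$, equivalently $\sum_{k\ge2}\phi(k)=1$, which is precisely where the hypothesis $\gamma\in[0,1]$ enters. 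Granting these tail bounds, a union bound over the finitely many spine vertices up to depth $R$ upgrades the one--step picture to the whole tree and yields the no--escape consistency, pinning down $\nu$ and finishing the proof.
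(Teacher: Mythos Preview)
Your proposal is correct and follows essentially the same route as the paper: reduce weak convergence to convergence on left--subtree cylinder events via the metric, use Markov branching to express these as sums of $q_k$'s times subtree $\nu_{n_i}$--probabilities, extract the pointwise limit of $q_k$ when exactly one argument is large (your displayed asymptotic is exactly what drives the paper's computation), and control the error via tail estimates plus a dominated--convergence argument in $k$ resting on $\sum_k\phi(k)=1$. The only organisational difference is that the paper packages your ``iterate down the backbone'' as a formal induction on $|\tau_0|$ and isolates the limit--sum swap into two preparatory lemmas (an explicit formula for $\sum q_k$ and its limit), whereas you sketch these steps in prose; the analytic content is the same.
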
 \noindent
A proof to the above theorem is given on page \pageref{pr:conv} in Appendix B.


We point out that the distributions $\mu_k$ are independent of how many of the outgrowths are to the left or to the right of the spine. For an ordered sequence of $k-2$ outgrowths, there are $k-1$ different ways to arrange them around the spine.

Below, we comment on some special cases. When $D < \infty$, $\phi(k) = 0$ for $k > D$ as it should be. In the case $D=3$ and $\alpha = 1/2$ the trees are generic, i.e.~ $\nu_n$ is a critical Galton--Watson process conditioned to have $n$ edges. This follows from the fact that the first split distributions can, in this case, be written as 
\begin {equation}
q_2(n) = \frac{1}{2}\frac{Z_n}{Z_{n+1}} \qquad \text{and} \qquad q_3(n_1,n_2) = \frac{1}{4}\frac{Z_{n_1} Z_{n_2}}{Z_{n_1+n_2+1}}   
\end {equation}
with
\begin {equation}
 Z_n = \frac{\G{n+\frac{1}{2}}}{\sqrt{\pi}\G{n+2}}.
\end {equation}
$Z_n$ can be interpreted as a finite volume partition function corresponding to branching weights $w_1 = w_3 = 1/4$ and $w_2 = 1/2$, see e.g.~\cite{sdgt}. Furthermore, this is the only special case in which we obtain generic trees. This can be seen from the fact that when $D>3$, outgrowths from the same vertex on the spine are dependent.

When $D=\infty$ and $\gamma = 1$, 
\begin {equation} \label{eq:degexp}
 \phi(k) = \alpha (1-\alpha)^{k-2} 
\end {equation}
and it falls off exponentially in $k$. When $D=\infty$ and $\gamma < 1$ we find that for large $k$
\begin {equation} \label{eq:degpower}
\phi(k) = \frac{\alpha \G{\frac{2-\gamma}{1-\gamma}}}{\G{\frac{1-\alpha}{1-\gamma}}} k^{-1-\frac{\alpha}{1-\gamma}}\left(1+O\left(k^{-1}\right)\right) 
\end {equation}
i.e.~it falls off with a power law in $k$. From the last formula, we see that when $\alpha \leq 1-\gamma$, the expected value of the degree of a vertex on the spine is infinite. A simple and interesting special case arises when $\gamma = 0$ in which case the outgrowths from the spine are single leaves. Such graphs have been referred to as {\it caterpillars} in the literature. The degrees of the vertices on the spine are distributed independently by
\begin {equation} \label{phiinfcat}
 \phi(k) = \frac{\alpha \Gamma\left(k-1-\alpha\right)}{\Gamma\left(k\right)\Gamma\left(1-\alpha\right)} 
\end {equation}
and they have an infinite expected value for all values of $\alpha$.   These caterpillars are a special case of 'caterpillars at a phase transition' in the equilibrium statistical mechanical model studied in \cite{jonsson:2009}. We will consider this special case in more detail in the next section.
\section{The Hausdorff dimension} \label{s:hd}
The Hausdorff dimension is a notion of dimension of graphs and is defined in terms of how the volume of the graph ball $B_R$ scales with its radius $R$. The Hausdorff dimension of a graph $G$ is defined as 
\begin {equation} \label{dhdef}
d_H = \lim_{R\rightarrow\infty} \frac{\log(|B_R(G)|)}{\log(R)}
\end {equation}
provided that the limit exists. This definition is only interesting on an infinite graph. On the hyper--cubic lattice $\mathbb{Z}^d$ it holds that $d_H = d$ but in general $d_H$ is not an integer. This dimension has been studied by physicists, especially in the quantum gravity literature, see e.g.~\cite{book} and should not be confused with the usual notion of Hausdorff dimension in a metric space, although there are some similarities.

The Hausdorff dimension can be defined in different ways for random graphs. If the graphs are distributed by $\nu$ then they might first of all have, $\nu$--almost surely, a Hausdorff dimension $d_H$ as defined above. Secondly, we define the annealed Hausdorff dimension as 
\begin {equation} \label{dhdefexp}
\bar{d}_H =  \lim_{R\rightarrow\infty} \frac{\log(\langle|B_R(G)|\rangle_\nu)}{\log(R)}
\end {equation}
where $\langle \cdot \rangle_\nu$ denotes expected value with respect to $\nu$.

There is another notion of dimensionality which applies when one considers a sequence of finite volume measures $(\nu_n)_{n\geq 0}$ on a set of graphs. It is usually defined in terms of how the average value of some typical distance in the graph (the maximum distance between vertices, the mean distance of vertices from the root, etc.) scales in relation to the volume of the graph $n$ as it grows. This dimension has also been referred to as the Hausdorff dimension in the physics literature but to avoid confusion we will refer to it here as the {\it fractal dimension} and denote it by $d_f$.  To give a more precise definition, we adopt the one from \cite{vs} which is as follows: Define the radius of a finite tree $T$ by
\begin {equation}
 R_T = \frac{1}{2|T|} \sum_{v \in V(T)} d_T(r,v) \sigma_T(v)
\end {equation}
where $V(T)$ is the vertex set of $T$, $r$ is the root, $d_T$ is the graph metric and $\sigma_T(v)$ denotes the degree of $v$. The fractal dimension is defined as
\begin {equation} \label{eq:fractal}
 d_f = \lim_{n\rightarrow\infty} \frac{\log(n)}{\log(\langle R_T\rangle_{\nu_n})}.
\end {equation}
 If $\nu_n$ converge to a measure $\nu$ concentrated on infinite graphs, $d_f$ has been observed to be equal to $d_H$ (or $\bar{d}_H$) in many situations, a simple example is the uniform tree and modifications of it, see e.g.~\cite{sdgt}. It is however straightforward to find a counterexample where $d_H \neq d_f$ and it is not entirely clear which conditions guarantee equality. We will comment on this relation in the ag-model below.

We will now calculate the annealed Hausdorff dimension of the trees distributed by $\nu$ from Theorem \ref{th:conv}. 
\begin {theorem}\label{th:annhaus}
Let  $\max\{0,a\} < \alpha \leq 1$. The random trees, distributed by $\nu$ described in Theorem \ref{th:conv}, have an annealed Hausdorff dimension
\begin {equation}
 \bar{d}_H = \frac{1}{\alpha}.
\end {equation}
\end {theorem}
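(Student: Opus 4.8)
\emph{Proof strategy.} The plan is to combine the explicit spine decomposition of $\nu$ provided by Theorem~\ref{th:conv} with a generating function computation of $\langle |B_R(\tau)|\rangle_\nu$. Write $C_\infty(m)$ for the expected number of vertices at distance exactly $m$ from the root under $\nu$, so $\langle |B_R(\tau)|\rangle_\nu=\sum_{m=0}^R C_\infty(m)$. Under $\nu$ the tree has a single spine $v_1,v_2,\dots$ with $v_j$ at distance $j$ from the root, the spine degrees are i.i.d.\ with law $\phi$, and the outgrowth families at distinct spine vertices are independent; hence the vertices at distance $m$ from the root are $v_m$ together with, for each $j<m$, the vertices of the outgrowths at $v_j$ lying at distance $m-j$ from $v_j$. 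This yields $C_\infty(0)=1$ and $C_\infty(m)=1+\sum_{\ell=1}^{m-1}C(\ell)$ for $m\ge 1$, where $C(\ell)$ is the expected number of outgrowth vertices at distance $\ell$ from a given spine vertex. Let $w(n)$ be the expected number of size-$n$ outgrowths at a spine vertex and let $D_n(\ell)$ be the expected number of vertices at distance $\ell$ from the root of a $\nu_n$-distributed tree; since by~(\ref{muinfgen}) an outgrowth is, conditionally on having size $n$, distributed by $\nu_n$, we get $C(\ell)=\sum_{n\ge 1}w(n)D_n(\ell)$. Assembling these identities and writing $\mathcal D_n(x)=\sum_{\ell\ge 0}D_n(\ell)x^\ell$ (so $\mathcal D_n(1)=n+1$), one obtains
\begin{equation*}
\sum_{R\ge 0}\langle |B_R(\tau)|\rangle_\nu\, x^R=\frac{1+x\,g(x)}{(1-x)^2},\qquad g(x):=\sum_{n\ge 1}w(n)\bigl(\mathcal D_n(x)-1\bigr).
\end{equation*}
Since $R\mapsto\langle |B_R|\rangle_\nu$ is nondecreasing, a standard Tauberian theorem reduces the claim $\bar d_H=1/\alpha$ to the single estimate $g(x)\asymp(1-x)^{1-1/\alpha}$ as $x\uparrow 1$. (When $\alpha=1$ the model is a linear graph and $\bar d_H=1=1/\alpha$ trivially, so assume $\alpha<1$.)

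I would then establish the two inputs feeding into $g$. \emph{(A) The tail of $w$:} the claim is $w(n)\asymp n^{-1-\alpha}$. Using (\ref{phiinfgen}), (\ref{muinfgen}) and the identity (\ref{sumofgamma}), the size generating function of one outgrowth at a degree-$k$ spine vertex has the closed form
\begin{equation*}
\Psi_k(y)=c_k\int_0^1\bigl(1-(1-yt)^a\bigr)\bigl(1-(1-t)^a\bigr)^{k-3}\,dt,\qquad c_k=\frac{\G{k-1+\tfrac1a}}{\G{\tfrac{1+a}{a}}\G{k-1}},
\end{equation*}
where only $3\le k\le D$ contribute when $D<\infty$. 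A routine analysis of this integral near $y=1$ shows $1-\Psi_k(y)$ is of order $(1-y)\,k^{1/a-1}$ for $k\ll(1-y)^{-a}$ and of order $(1-y)^a$ for $k\gg(1-y)^{-a}$; summing $\sum_k\phi(k)(k-2)\bigl(1-\Psi_k(y)\bigr)$ against the tail of $\phi$ from (\ref{phiinfgen})--(\ref{eq:degpower}) gives $\sum_n w(n)y^n=\mathrm{const}-\Theta\bigl((1-y)^\alpha\bigr)$, whence $w(n)\asymp n^{-1-\alpha}$ by a Tauberian theorem. This is exactly where the hypothesis $\max\{0,a\}<\alpha$ is used: it is what makes the sums over $k$ converge, in particular $\sum_n w(n)<\infty$. \emph{(B) The profile of finite trees:} the claim is $\mathcal D_n(x)\asymp\min\{n,(1-x)^{-1/\alpha}\}$ as $x\uparrow 1$, uniformly in $n$. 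Peeling off the root in the Markov branching identity of Proposition~\ref{p:MB1} gives
\begin{equation*}
\mathcal D_n(x)=1+x+x\sum_{k}\sum_{n_1+\cdots+n_{k-1}=n-1}q_k(n_1,\dots,n_{k-1})\sum_{i=1}^{k-1}\bigl(\mathcal D_{n_i}(x)-1\bigr),
\end{equation*}
together with $\mathcal D_n(1)=n+1$. Inserting the closed form (\ref{qfuncgengen}) and studying the bivariate generating function $\sum_n\mathcal D_n(x)y^n$ by singularity analysis --- equivalently, a two-sided induction on the radius variable --- yields the asserted bound; heuristically it is forced by the self-similarity of the recursion together with the constraint $\sum_\ell D_n(\ell)=n+1$, which pins the typical depth of a $\nu_n$-tree at $n^\alpha$.

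Combining (A) and (B) finishes the argument: splitting the sum defining $g$ at $n\sim(1-x)^{-1/\alpha}$,
\begin{equation*}
g(x)\asymp\sum_{n\ge1}n^{-1-\alpha}\min\{n,(1-x)^{-1/\alpha}\}\asymp\sum_{n\le(1-x)^{-1/\alpha}}n^{-\alpha}\;+\;(1-x)^{-1/\alpha}\!\!\sum_{n>(1-x)^{-1/\alpha}}\!\!n^{-1-\alpha},
\end{equation*}
and both terms on the right are of order $(1-x)^{1-1/\alpha}$, so $g(x)\asymp(1-x)^{1-1/\alpha}$ and $\bar d_H=1/\alpha$ by the first paragraph.

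The step I expect to be the main obstacle is (B). Unlike a Markov branching model with multiplicative branching weights, the first split distribution (\ref{qfuncgengen}) carries the extra, non-product factor $1-a-\alpha+\alpha\sum_i n_i/(n+1-n_i)$, so $\sum_n\mathcal D_n(x)y^n$ does not satisfy a textbook functional equation; handling this term (for instance via $1/(n+1-n_i)=\int_0^1 t^{\,n-n_i}\,dt$) and extracting the bound uniformly in $n$ is the technical heart of the argument. A secondary, pervasive point is that the relevant sums (such as $\sum_n n\,w(n)$ or $\sum_\ell C(\ell)$) all diverge --- which is precisely why $\bar d_H>1$ --- so one must propagate matching two-sided bounds throughout rather than leading-order asymptotics alone.
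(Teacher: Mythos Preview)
Your global framework coincides with the paper's: the spine decomposition reduces $\langle|B_R|\rangle_\nu$ to the single--vertex outgrowth contribution, and your $g(x)$ is (up to a harmless shift) the paper's $H(x)=\sum_R\langle\sum_i|B_R(\tau_i)|\rangle_\mu\,x^R$. The divergence is in how $g$ is controlled. The paper does \emph{not} separate into your (A) and (B). Instead it keeps the $n$--sum inside and works with the weighted bivariate object
\[
G_R(z,\zeta)=\sum_{n\ge1}\frac{\Gamma(n-a)}{\Gamma(1-a)\Gamma(n+1)}\,\Bigl\langle z^{|B_R|}\Bigr\rangle_{\nu_n}\zeta^n,
\]
the weight $\Gamma(n-a)/\Gamma(n+1)$ being precisely the one appearing in $\mu_k$ \emph{and} the one that cancels the Gamma prefactors in $q_k$. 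With this weighting, the Markov branching recursion in $R$ becomes $\partial_\zeta G_R=\Phi(G_{R-1})$ for an explicit $\Phi$; differentiating in $z$ at $z=1$, summing in $R$, and changing variables turns this into an inhomogeneous hypergeometric ODE in one variable which is solved in closed form, yielding
\[
H(x)=\frac{1-\alpha}{1-a-\alpha}\,(1-x)^{-1}\Bigl({}_2F_1\bigl(1,\tfrac{1-\alpha-a}{\alpha},\tfrac{\alpha-a}{\alpha};x\bigr)-1\Bigr),
\]
from which the exact formula for $\langle\sum_i|B_R(\tau_i)|\rangle_\mu$ (hence $\bar d_H=1/\alpha$) is read off. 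No Tauberian step, no two--sided bounds, no uniformity in $n$ are needed.

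Your route, by contrast, hinges on (B), and this is a genuine gap rather than a routine technicality. The claim $\mathcal D_n(x)\asymp\min\{n,(1-x)^{-1/\alpha}\}$ uniformly in $n$ is a statement about the depth profile of $\nu_n$--trees; in particular the lower half says that a positive fraction of the mass of a $\nu_n$--tree sits at depth $O(n^\alpha)$, i.e.\ $d_f\ge 1/\alpha$. The paper explicitly records (Section~5.1) that the fractal dimension $d_f=1/\alpha$ is known only under an unproven scaling hypothesis, and uses the present theorem as \emph{evidence} for that hypothesis --- so invoking (B) here is close to circular. Your proposed attack on (B) via $\sum_n\mathcal D_n(x)y^n$ runs into the difficulty that without the weight $\Gamma(n-a)/\Gamma(n+1)$ the Gamma factors in $q_k$ do not cancel and the recursion does not close up nicely; the ``two--sided induction on the radius variable'' you allude to is exactly the paper's method, but applied to the wrong (unweighted) object. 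The fix is to drop the (A)/(B) decomposition altogether and compute $g(x)=H(x)$ directly with the weight that $\mu_k$ hands you; this both bypasses (B) and turns your integral trick $\tfrac{1}{n+1-n_i}=\int_0^1 t^{\,n-n_i}\,dt$ into the $\int_0^1(1-\zeta')^{\alpha-a-1}d\zeta'$ that produces the Beta--function giving the exact $w(n)$ and, ultimately, the closed form for $H$.
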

To prove Theorem \ref{th:annhaus}, we need to analyse the large $R$ behaviour of $\langle |B_R| \rangle_{\nu}$.  In order to simplify the notation we let $(\emptyset)$ be the empty tree and define $\mu_2((\emptyset)) = 1$. We then extend the probability distributions $\mu_k$, $k \geq 2$, to probability distributions on
$ \bigcup_{k=2}^\infty \mathcal{T}^{k-2}$ and define
\begin {equation} \label{mudef}
 \mu = \sum_{k=2}^\infty \phi(k) \mu_k.
\end {equation}
Since the outgrowths from different vertices on the spine are i.i.d.~it is clearly sufficient to show that 
\begin {equation}
 \Big\langle \sum_{i} |B_R(\tau_i)| \Big\rangle_{\mu} = R^{1/\alpha - 1} (1+o(1))
\end {equation}
as $R\longrightarrow \infty$. This follows from the Lemma below which is proved on page \pageref{pr:annhaus} in Appendix B.
\begin {lemma} \label{l:annhaus}
For  $\max\{0,a\} < \alpha \leq 1$,
\begin {eqnarray*} 
  \Big\langle \sum_{i} |B_R(\tau_i)| \Big\rangle_{\mu} &=& \frac{(R + \frac{\alpha - a}{\alpha})\G{\frac{\alpha-a}{\alpha}} \G{R+\frac{1-a}{\alpha}}}{\G{\frac{1-a}{\alpha}}\G{R+\frac{2\alpha -a}{\alpha}}}-1.
\end {eqnarray*}
\end {lemma}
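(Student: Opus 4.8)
The plan is to express the left-hand side as a single integral of a generating function, and then to compute that generating function by induction on $R$, using the Markov branching property of Proposition~\ref{p:MB1}.

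\emph{Reduction to a generating-function integral.} Since $\mu_k$ is, by~\eqref{muinfgen}, a symmetric function of its arguments $(\tau_1,\dots,\tau_{k-2})$, linearity of expectation gives $\big\langle\sum_i|B_R(\tau_i)|\big\rangle_\mu=\sum_{k\ge3}(k-2)\,\phi(k)\,\langle|B_R(\tau_1)|\rangle_{\mu_k}$, so the $k=2$ term drops out. To decouple the $k-2$ outgrowths in $\mu_k$ I would use $\frac{1}{1+m}=\int_0^1x^m\,dx$. Writing $h(n):=\frac{a\Gamma(n-a)}{\G{1-a}\G{n+1}}$ (the exact per-outgrowth weight in~\eqref{muinfgen}), $\Theta(x):=\sum_{n\ge1}h(n)x^n$, $g_R(n):=\langle|B_R|\rangle_{\nu_n}=\sum_{\tau\in\mathcal{T}_n}\nu_n(\tau)|B_R(\tau)|$, and $\Phi_R(x):=\sum_{n\ge1}h(n)g_R(n)x^n$, this turns the $\mu_k$-expectation into $\frac{\Gamma(k-1+1/a)}{\Gamma((1+a)/a)\Gamma(k-1)}\int_0^1\Phi_R(x)\,\Theta(x)^{k-3}\,dx$. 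Summing over $k$, the ratios of Gamma functions from $\phi(k)$ and from the $\mu_k$ normalization cancel, the factor $(k-2)$ is absorbed by $\Gamma(k-1)=(k-2)\Gamma(k-2)$, and the remaining power series in $\Theta(x)$ is a generalized binomial series; using $\Theta(x)=1-(1-x)^a$ from~\eqref{sumofgamma} it collapses to
\[
\Big\langle\sum_i|B_R(\tau_i)|\Big\rangle_\mu=\frac{\alpha(1-\alpha)}{a}\int_0^1\Phi_R(x)\,(1-x)^{\alpha-1-a}\,dx .
\]

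\emph{Computing $\Phi_R$.} I would track vertices at distance greater than $R$ from the root. In the first-split decomposition $\tau=\emptyset\ast\tau_1\ast\cdots\ast\tau_{k-1}$ the quantity $\delta_R(\tau):=(|\tau|+1)-|B_R(\tau)|$ satisfies $\delta_R(\tau)=\sum_{i=1}^{k-1}\delta_{R-1}(\tau_i)$ for $R\ge1$, with $\delta_0(\tau)=|\tau|$; averaging over $\nu_n$ and using~\eqref{mbproperty} gives $\delta_R(n)=\sum_{k}\sum_{n_1+\dots+n_{k-1}=n-1}q_k(n_1,\dots,n_{k-1})\sum_i\delta_{R-1}(n_i)$. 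Inserting the explicit first split distribution~\eqref{qfuncgengen}, the only obstruction to factorization over the $n_i$ is the term $\alpha\sum_i\frac{n_i}{n+1-n_i}$, which I would linearize by $\frac{1}{n+1-n_i}=\int_0^1t^{\,n-n_i}\,dt$. Each sum over $(n_1,\dots,n_{k-1})$ then becomes a product of the generating functions $\Theta$, $x\Theta'$, $\Phi_{R-1}$ (and their dilations $x\mapsto xt$), the sums over $k$ are generalized binomial series again, and one finds that a whole block of terms cancels; what remains is a first-order linear ODE in $x$ for $\mathcal{D}_R(x):=\sum_nh(n)\delta_R(n)x^n$ in terms of $\mathcal{D}_{R-1}$, its derivative, and one antiderivative of $\mathcal{D}_{R-1}(x)(1-x)^{\alpha-1-a}$, with $\mathcal{D}_0(x)=ax(1-x)^{a-1}$. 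Solving it with an integrating factor — equivalently, guessing the Gamma-function closed form of $g_R(n)$ and verifying it by induction on $R$ from $g_0(n)=1,\ g_1(n)=2$ — yields $\Phi_R(x)=\big(ax(1-x)^{a-1}+1-(1-x)^a\big)-\mathcal{D}_R(x)$ explicitly.

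\emph{Evaluating the integral.} Plugging the closed form of $\Phi_R$ into the display above, each term is a power of $x$ times a power of $1-x$, so the $x$-integration is a Beta integral $\int_0^1x^n(1-x)^{\alpha-a-1}\,dx=\G{n+1}\G{\alpha-a}/\G{n+1+\alpha-a}$ (finite exactly because $\alpha>a$), and the remaining sum over $n$ telescopes via $\frac{\Gamma(n-a)}{\Gamma(n+\alpha-a)}-\frac{\Gamma(n+1-a)}{\Gamma(n+1+\alpha-a)}=\alpha\,\frac{\Gamma(n-a)}{\Gamma(n+1+\alpha-a)}$ and its shifts; collecting the Gamma factors gives the stated formula, where the hypothesis $\max\{0,a\}<\alpha\le1$ enters precisely through $\alpha>a$ (convergence of the Beta integrals) and $\alpha>0$ (control of $\Phi_R(x)$ as $x\to1^-$). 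The step I expect to be hardest is the middle one: pushing the coupling term $\alpha\sum_i\frac{n_i}{n+1-n_i}$ through so that the recursion for $\Phi_R$ actually closes, and then solving the resulting ODE recursion in a form clean enough to survive the final integration.
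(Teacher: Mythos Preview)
Your opening reduction is exactly the paper's: your $\Phi_R$ equals $a$ times the paper's $V_R=\partial_zG_R(z,\zeta)\big|_{z=1}$, and your displayed integral is precisely equation~\eqref{brinf}. The first-order integro-differential recursion you describe for $\mathcal{D}_R$ (equivalently for $\Phi_R$) is likewise the relation the paper obtains by differentiating the $G_R$-recursion once in $z$: it involves $V_{R-1}$, $V_{R-1}'$, and $\int_0^\zeta V_{R-1}(\zeta')(1-\zeta')^{\alpha-1-a}\,d\zeta'$, just as you say. Where you part ways is in how to solve it. The paper does \emph{not} attempt to determine $V_R$ for each $R$; instead it differentiates once more in $\zeta$ to eliminate the integral, introduces the generating function $Q_x(\zeta)=\sum_{R\ge1}V_R(\zeta)\,x^R$, and thereby collapses the whole tower of recursions into a single second-order linear ODE in $\zeta$ with $x$ as a parameter. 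A change of variables puts this in hypergeometric form, it is solved via a Green's function, and after integrating against $(1-\zeta)^{\alpha-a-1}$ the paper guesses and verifies the closed form~\eqref{nicesolution} for $H(x)=\sum_R\big\langle\sum_i|B_R(\tau_i)|\big\rangle_\mu\,x^R$, from which the lemma follows by reading off coefficients.

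Your alternative --- solve the recursion level by level and then evaluate Beta integrals --- is viable in principle (one checks that each $V_R$ is a finite linear combination of powers $(1-\zeta)^{j\alpha}$ and $(1-\zeta)^{j\alpha+a}$ with $0\le j\le R-1$, each of which integrates against $(1-\zeta)^{\alpha-a-1}$ to a Beta value), but the step you yourself flag as hardest is left unexecuted: you do not exhibit the ``Gamma-function closed form of $g_R(n)$'' you propose to guess, and the $R$-dependent coefficients in the $V_R$ expansion are not obviously simpler than the hypergeometric the paper arrives at. The extra generating variable $x$ is precisely the device that makes the $R$-dependence tractable; without it your sketch is a programme rather than a proof. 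Two small slips: with the paper's convention that $|\cdot|$ counts edges, the additivity $\delta_R=\sum_i\delta_{R-1}$ requires $\delta_R(\tau)=|\tau|-|B_R(\tau)|$ (not $|\tau|+1-|B_R(\tau)|$), and correspondingly $g_0(n)=0$, $g_1(n)=1$ rather than $1,2$.
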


Note that $\bar{d}_H = \infty$ when $D = \infty$ and $\alpha \leq 1-\gamma$ ($\alpha \leq \max\{0,a\}$) since then the expected value of degrees of vertices on the spine is infinite. However, the $\nu$--almost sure Hausdorff dimension might still be finite. We confirm this in the case $D=\infty$ and $\gamma = 0$, when the trees are caterpillars.
\begin {theorem} \label{th:ashaus}
 Let $0 < \alpha \leq 1$ and $\gamma = 0$. Then
\begin {equation}
 d_H = \frac{1}{\alpha}
\end {equation}
$\nu$--almost surely.
\end {theorem}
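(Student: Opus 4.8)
The plan is to reduce the statement to a logarithmic growth estimate for a heavy‑tailed sum of i.i.d.\ random variables, using the explicit description of the infinite‑volume measure. By Theorem~\ref{th:conv} (where now $a=1$ since $\gamma=0$), a $\nu$‑typical tree $\tau$ consists of a single spine $r=v_0,v_1,v_2,\dots$ together with, at each spine vertex $v_j$ with $j\ge1$, exactly $D_j-2$ single leaves, where $D_j$ is the degree of $v_j$ and the $D_j$ are i.i.d.\ with law $\phi$ given in (\ref{phiinfcat}). Counting vertices by their distance from the root, and noting that a leaf attached to $v_j$ lies at distance $j+1$, we obtain for every $R\ge1$
\begin{equation*}
 |B_R(\tau)|=(R+1)+\sum_{j=1}^{R-1}(D_j-2)=3-R+S_R,\qquad S_R:=\sum_{j=1}^{R-1}D_j .
\end{equation*}
Thus it suffices to prove $\log S_R/\log R\to1/\alpha$ $\nu$‑a.s.; indeed we shall see that a.s.\ $S_R\ge R^{1/\alpha-\epsilon}$ for all large $R$, so for $\alpha<1$ one has $R=o(S_R)$ and hence $|B_R(\tau)|=S_R\,(1+o(1))$, which makes the two $\log(\cdot)/\log R$ limits coincide. (When $\alpha=1$ the tree is the deterministic half‑line, $|B_R|=R+1$ and $d_H=1$.)

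The first step is to record the tail of $\phi$. Putting $n=k-1$ in (\ref{phiinfcat}) and telescoping via
\begin{equation*}
 \frac{\alpha\,\G{n-\alpha}}{\G{1-\alpha}\,\G{n+1}}=\frac{\G{n-\alpha}}{\G{1-\alpha}\,\G{n}}-\frac{\G{n+1-\alpha}}{\G{1-\alpha}\,\G{n+1}},
\end{equation*}
one obtains $\P(D_1\ge k)=\frac{\G{k-1-\alpha}}{\G{1-\alpha}\,\G{k-1}}$, which by (\ref{ordergamma}) equals $\G{1-\alpha}^{-1}k^{-\alpha}\big(1+O(k^{-1})\big)$. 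In particular $D_1$ has infinite mean for every $0<\alpha\le1$ (consistent with $\bar d_H=\infty$), whereas the truncated mean satisfies $\mathbb{E}\big[D_1\,\mathbf 1\{D_1\le T\}\big]\asymp T^{1-\alpha}$ as $T\to\infty$ when $\alpha<1$. Since $S_R$ is then a sum of i.i.d.\ variables in the domain of attraction of a one‑sided $\alpha$‑stable law, the asymptotics $\log S_R/\log R\to1/\alpha$ are classical, but the Borel--Cantelli argument below is self‑contained.

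For the lower bound I would use $S_R\ge M_R:=\max_{1\le j\le R-1}D_j$. For any $\epsilon>0$,
\begin{equation*}
 \P\big(M_R\le R^{1/\alpha-\epsilon}\big)=\P\big(D_1\le R^{1/\alpha-\epsilon}\big)^{R-1}\le\exp\!\big(-(R-1)\,\P(D_1>R^{1/\alpha-\epsilon})\big),
\end{equation*}
and by the tail estimate the right‑hand side is $\le\exp(-c\,R^{\alpha\epsilon})$ for large $R$, hence summable in $R$. By Borel--Cantelli, $\nu$‑a.s.\ $S_R>R^{1/\alpha-\epsilon}$ for all large $R$, so $\liminf_R\log S_R/\log R\ge1/\alpha-\epsilon$, and letting $\epsilon\downarrow0$ along a sequence gives $\liminf_R\log S_R/\log R\ge1/\alpha$ a.s.

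For the upper bound the naive estimate $S_R\le(R-1)M_R$ loses a factor and is too weak, so I would truncate. Fix $\epsilon>0$ and choose $\epsilon'>0$ with $(1-\alpha)\epsilon'<\epsilon$; set $T_n=n^{1/\alpha+\epsilon'}$ and $\widetilde S_n=\sum_{j=1}^{n}D_j\,\mathbf 1\{D_j\le T_n\}$. A union bound gives $\P(\exists\,j\le n:D_j>T_n)\le n\,\P(D_1>T_n)=O(n^{-\alpha\epsilon'})$, while the truncated first moment gives $\mathbb{E}[\widetilde S_n]=n\,\mathbb{E}[D_1\,\mathbf 1\{D_1\le T_n\}]=O\big(n^{1/\alpha+(1-\alpha)\epsilon'}\big)$, so by Markov's inequality $\P\big(\widetilde S_n\ge n^{1/\alpha+\epsilon}\big)=O\big(n^{-(\epsilon-(1-\alpha)\epsilon')}\big)$. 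Both bounds are summable along the geometric subsequence $n_i=2^i$, so Borel--Cantelli yields that a.s.\ for all large $i$ one has $S_{n_i}=\widetilde S_{n_i}<n_i^{1/\alpha+\epsilon}$; since $n\mapsto S_n$ is non‑decreasing, $n_i\le R<n_{i+1}$ forces $S_R\le S_{n_{i+1}}<(2R)^{1/\alpha+\epsilon}$, whence $\limsup_R\log S_R/\log R\le1/\alpha+\epsilon$, and $\epsilon\downarrow0$ completes the argument; combined with the lower bound this gives $\log S_R/\log R\to1/\alpha$, hence $d_H=1/\alpha$ $\nu$‑a.s. The main obstacle is precisely this calibration: $T_n$ must be taken slightly above $n^{1/\alpha}$ so that with high probability no single $D_j$ with $j\le n$ exceeds it, yet small enough that $\mathbb{E}[\widetilde S_n]$ stays of order $n^{1/\alpha+o(1)}$, and one is forced onto a geometric subsequence because the relevant deviation probabilities decay only polynomially in $n$.
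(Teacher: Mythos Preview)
Your argument is correct. Both you and the paper reduce to the i.i.d.\ sum $S_R$ of spine degrees and invoke Borel--Cantelli, but the two halves are handled by genuinely different devices. For the lower bound the paper computes the Laplace transform $\langle e^{-\theta X_1}\rangle = 1-(1-e^{-\theta})^\alpha$ and runs an exponential Chebyshev inequality, obtaining the sharper estimate $S_R\ge C_1\,(R/\log R)^{1/\alpha}$; your bound $S_R\ge M_R$ together with the tail of $\phi$ is more elementary and still summable, though it only yields $S_R\ge R^{1/\alpha-\epsilon}$. For the upper bound the paper invokes Feller's limit theorem for i.i.d.\ sums with infinite mean as a black box, whereas you essentially reprove the relevant case by truncating at $T_n=n^{1/\alpha+\epsilon'}$, applying Markov's inequality to the truncated sum, and passing through a geometric subsequence via monotonicity. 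Your route is self-contained and avoids the external reference; the paper's route gives logarithmic rather than polynomial corrections on both sides, which it then reuses verbatim in the proof of Theorem~\ref{th:spec} on the spectral dimension. If you intend to go on to the spectral dimension, you would need to tighten your bounds to log factors, and for that the Laplace-transform lower bound is the natural tool.
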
 \noindent
The proof is given on page \pageref{pr:ashaus} in Appendix B.
\subsection{Comparison to the fractal dimension}
In the original paper on the vertex splitting model \cite{vs} it was shown that the expected value of the radius of a tree can be written as
\begin {equation} \label{eq:radius}
 \langle R_T \rangle_{\nu_n} = \frac{n+1}{2n}\sum_{n_2=0}^n (2n_2+1)\sum_{k} \tilde{q}_k(n-n_2,n_2)
\end {equation}
where $\tilde{q}_k(n_1,n_2)$ is the probability that a uniformly chosen vertex $v$ has degree $k$ and that the volume of the subtree attached to $v$ containing the root is $n_1$ and that the other subtrees attached to $v$ have a total volume $n_2$. Furthermore, in the case of linear splitting weights $w_i = Ai+B$,  $\tilde{q}_k(n_1,n_2)$ was shown to be a solution of a system of linear recursion equations determined by the growth rules of the vertex splitting model, see \cite[Section 3]{vs}. These recursion equations could not be solved explicitly but it was assumed that the following scaling holds
\begin {equation} \label{eq:2pscaling}
 \tilde{q}_k(n_1,n_2) \sim (n_1+n_2)^{-2+\lambda}\omega_k(n_1/(n_1+n_2))
\end {equation}
for some $\lambda$ and  ``scaling functions`` $\omega_k$.  The linear recursions were thus reduced to an eigenvalue equation for $\lambda$
\begin {equation} \label{eq:eig}
 \mathbf{C} \mathbf{x} = \lambda \mathbf{x}
\end {equation}
where $\lambda$ is the Perron--Frobenius eigenvalue of the $\binom{D}{2} \times \binom{D}{2}$ matrix $\mathbf{C}$ indexed by a pair of two indices $ki$, $2 \leq k \leq D$, $2 \leq i < k$, and given by the matrix elements
\begin {equation} \label{eigeq}
 C_{ki,jn} = \frac{1}{w_2}(w_{k,j+2-k}((j-i)\delta_{i,n} + i \delta_{n,j-k+i}) - w_k \delta_{k,j}\delta_{i,n}).
\end {equation}
Comparing (\ref{eq:radius}) and (\ref{eq:2pscaling}) to (\ref{eq:fractal}) allows one to find the fractal dimension
\begin {equation} \label{eq:fracresult}
 d_f = \lambda^{-1}.
\end {equation}
The scaling assumption (\ref{eq:2pscaling}) was not proven but the results (\ref{eq:eig}-\ref{eq:fracresult}) were supported by simulations in the case $D=3$.

It is interesting to compare $d_f$, corresponding to the weights (\ref{partw}) of the ag--model, to the values of $\bar{d}_H$ obtained in Theorem \ref{th:annhaus}. It is straightforward to solve (\ref{eq:eig}) for small values of $D$ and find that $d_f = \bar{d}_H = 1/\alpha$. Furthermore, we have calculated $d_f$ in the case $D=\infty$ and $\gamma = 1$ by solving (\ref{eq:eig}) numerically. We used a cutoff $D=30$ on the system which is expected to closely approximate the case $D=\infty$, since the vertex degree distribution is believed to fall of exponentially in this case, cf.~(\ref{ddist}). The results are shown in Fig.~\ref{f:hausinf}. The agreement we find, supports the validity of the scaling assumption (\ref{eq:2pscaling}) to a very high maximum degree $D$.
\begin{figure} [!h]
\centerline{\scalebox{0.3}{\includegraphics{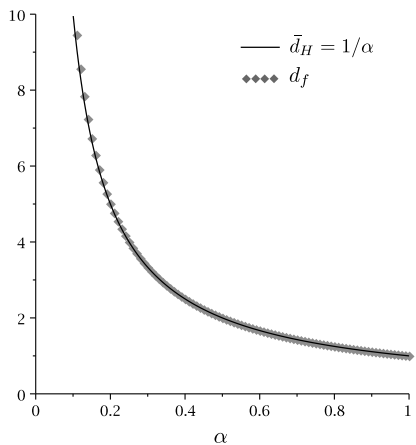}}}
\caption{Comparison of $d_f$ (gray squares) and $\bar{d}_H = 1/\alpha$ (solid line) in the case $D=\infty$ and $\gamma = 1$. Using the weights (\ref{partw}), we calculated $d_f$ numerically from (\ref{eq:eig}-\ref{eq:fracresult}) for $\alpha = i/100,~1\leq i \leq 100$, using a cutoff $D=30$ on the system.} \label{f:hausinf}
\end{figure}

\subsection{The spectral dimension}

We conclude this section by mentioning another notion of dimension of graphs called the {\it spectral dimension}. It is defined in terms of how the return probability of a random walker on the graph decays with time $t$. More precisely, for a tree $\tau$ let $p_\tau(t)$ be the probability that a simple random walk which leaves the root at time $t=0$ is back at the root at time $t$. The spectral dimension of $\tau$ is defined as
\begin {equation}
d_s = -\lim_{t\rightarrow\infty}\frac{2\log(p_\tau(t))}{\log(t)}
\end {equation}
 provided the limit exist. The spectral dimension can take any value greater than one and does not necessarily agree with the Hausdorff dimension. We refer to \cite{barlow:2005,sdgt,combs,brushes} for discussion of the spectral dimension of several types of random graphs.

It would be interesting to calculate the spectral dimension of the trees distributed by $\nu$. For now we only have results in the case when the trees are caterpillars.

\begin {theorem} \label{th:spec}
Let $0 < \alpha \leq 1$ and $\gamma = 0$. If $d_s$ exists then
\begin {equation}
 d_s = \dfrac{2}{1+\alpha}
\end {equation}
$\nu$-- almost surely
 \end {theorem}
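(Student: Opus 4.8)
The plan is to reduce the statement to the singular behaviour, as $w\uparrow 1$, of the return generating function of the walk at the root, which in the caterpillar case can be written explicitly as a continued fraction along the spine, and then to transfer that behaviour to the $t\to\infty$ asymptotics of $p_\tau(2t)$ by a Tauberian theorem. Fix a $\nu$--typical tree $\tau$: by Theorem~\ref{th:conv} it is a caterpillar with spine $r=v_0,v_1,v_2,\dots$, where $v_i$ carries $m_i:=k_i-2$ pendant leaves and the $k_i$, $i\ge 1$, are i.i.d.\ with law $\phi$ from~(\ref{phiinfcat}); in particular $\phi(2)=\alpha>0$ and $\mathbb{P}(m_1>x)\asymp x^{-\alpha}$, a heavy tail of index $\alpha$ (for $\alpha=1$ the tail is $\asymp 1/x$, producing only logarithmic modifications). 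A simple random walk treats each leaf as a forced excursion of length two, so the whole problem lives on the spine. Writing $F_\tau(w)=\sum_{t\ge 0}\mathbb{P}_r(X_t=r)\,w^t$ and letting $\beta_i(w)=\mathbb{E}_{v_i}\!\big[w^{T_{v_{i-1}}}\big]$ be the generating function of the first hitting time of $v_{i-1}$ from $v_i$, a one--step decomposition at $v_i$ (one contribution for each of the $m_i$ forced length--two leaf loops, one for the step to $v_{i+1}$, one for the step to $v_{i-1}$) gives
\begin{equation*}
\beta_i(w)=\frac{w}{\,m_i+2-m_i w^2-w\,\beta_{i+1}(w)\,},\qquad i\ge 1,\qquad F_\tau(w)=\frac{1}{1-w\,\beta_1(w)} .
\end{equation*}
Since $\tau$ is recurrent, $\beta_1(1)=1$, and $d_s$ is encoded by $F_\tau(w)\asymp (1-w)^{-(1-d_s/2)}$ up to slowly varying factors.

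The heart of the argument is then a tail estimate for the hitting time $T_r$ of the root started at $v_1$, since $1-\beta_1(e^{-\xi})=\mathbb{E}_{v_1}\big[1-e^{-\xi T_r}\big]$. The spine component of this excursion is simple random walk on $\mathbb{Z}_{\ge 0}$ started at $1$ and absorbed at $0$; if it reaches maximal level $M$ then, by gambler's ruin, $\mathbb{P}_{v_1}(M\ge m)\asymp 1/m$, the number of spine steps is $\asymp M^2$, and the time spent on the leaves of $v_j$ is $\asymp \ell_{v_j}m_j$, where $\ell_{v_j}$ is the (spine) local time at level $j$. Hence $T_r\asymp \sum_{j}\ell_{v_j}(1+m_j)$, and in the tail this is dominated by $M\cdot\sum_{j\le M}m_j$. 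The only a.s.\ input about the environment that is needed here is that $\log|B_R(\tau)|/\log R\to 1/\alpha$, i.e.\ $\sum_{j\le R}m_j=R^{1/\alpha+o(1)}$, which is precisely Theorem~\ref{th:ashaus}. Combining, $\mathbb{P}_{v_1}\big(T_r>x\big)=\mathbb{P}_{v_1}\big(M^{1+1/\alpha}\gtrsim x^{1+o(1)}\big)=x^{-\alpha/(1+\alpha)+o(1)}$, $\nu$--a.s., so by a standard Abelian estimate $1-\beta_1(e^{-\xi})=\xi^{\alpha/(1+\alpha)+o(1)}$ and, since $\alpha/(1+\alpha)<1$ dominates the $O(\xi)$ coming from $1-e^{-\xi}$,
\begin{equation*}
F_\tau(w)=(1-w)^{-\alpha/(1+\alpha)+o(1)}\qquad\text{as }w\uparrow 1,\ \ \nu\text{--a.s.}
\end{equation*}

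Finally, a Karamata Tauberian theorem turns this into $\sum_{s\le t}p_\tau(2s)=t^{\alpha/(1+\alpha)+o(1)}$ as $t\to\infty$, $\nu$--a.s. If the limit defining $d_s$ exists then $p_\tau(2t)=t^{-d_s/2+o(1)}$, so $\sum_{s\le t}p_\tau(2s)=t^{\max\{0,1-d_s/2\}+o(1)}$; matching exponents forces $1-d_s/2=\alpha/(1+\alpha)$, that is $d_s=2/(1+\alpha)$. (The hypothesis that $d_s$ exists is genuinely used: the generating function controls the partial sums cleanly, but $p_\tau(2t)$ itself may carry log--periodic oscillations caused by the atypically long leaf--free stretches of the spine that occur because $\phi(2)=\alpha>0$, over which the walk looks one--dimensional.)

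The hard part will be the tail estimate $\mathbb{P}_{v_1}(T_r>x)=x^{-\alpha/(1+\alpha)+o(1)}$: one must show that the multiplicative heuristic $T_r\approx(\text{maximal level})\times(\text{number of leaves below it})$ holds up to sub--polynomial corrections, uniformly enough in $x$, both over the walk and, $\nu$--a.s., over the heavy--tailed environment. Two devices should make this manageable: (i) monotone comparison on the continued fraction — $\beta_i(w)$ is nonincreasing in each $m_j$ with $j\ge i$, so sandwiching $(m_i)$ between deterministic profiles $c\,i^{1/\alpha-1}(\log i)^{\pm 1}$ (valid off an a.s.\ finite set of sites, changing $F_\tau$ near $w=1$ only by an a.s.\ finite factor) reduces the upper/lower bounds to explicit non--random caterpillars; (ii) for the environment, the truncation plus Borel--Cantelli estimates underlying Theorem~\ref{th:ashaus}. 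It is worth noting why one cannot simply invoke the usual heat--kernel/resistance machinery (e.g.\ Barlow--Coulhon--Kumagai): the resistance exponent is $1$ and the Hausdorff dimension is $1/\alpha$, which would \emph{formally} give $d_s=2(1/\alpha)/(1/\alpha+1)=2/(1+\alpha)$, but the heavy tails of $\phi$ destroy any uniform two--sided volume bound on balls, so that machinery does not apply off the shelf and the continued--fraction analysis above is required instead.
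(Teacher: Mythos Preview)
Your route is genuinely different from the paper's, and much more elaborate. The paper's argument is only a few lines: it quotes from \cite{sdgt} a two--sided inequality valid for any caterpillar $T$ and any integer $R>0$,
\[
\frac{1}{\tfrac{1}{R-1}+x+x|B_R(T)|}\;\le\;Q_T(x)\;\le\;R+\frac{2}{x|B_R(T)|},
\qquad Q_T(x)=\sum_{t\ge 0}p_T(t)(1-x)^{t/2},
\]
inserts the almost--sure volume bounds of Theorem~\ref{th:ashaus}, and sets $R=[x^{-\alpha/(1+\alpha)}]$. This immediately yields $Q_T(x)=x^{-\alpha/(1+\alpha)+o(1)}$ as $x\downarrow 0$, $\nu$--a.s., and the conclusion follows under the hypothesis that $d_s$ exists. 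No continued fraction, no hitting--time analysis, no Tauberian step; the only probabilistic input is Theorem~\ref{th:ashaus}, exactly as in your outline. What your approach would buy, if completed, is a self--contained derivation that does not rely on the \cite{sdgt} inequalities and that makes the mechanism (trapping by heavy leaf clusters along the spine) transparent.

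There is, however, a genuine gap in device~(i). You propose to sandwich the \emph{individual} $m_i$ between deterministic profiles $c\,i^{1/\alpha-1}(\log i)^{\pm 1}$ off an a.s.\ finite set of sites, and then use monotonicity of $\beta_i(w)$ in the $m_j$. The monotonicity is correct, but the sandwich fails on both sides. Since $\mathbb{P}(m_1>x)\asymp x^{-\alpha}$, the events $\{m_i>i^{1/\alpha}\}$ have probability $\asymp 1/i$ and are independent, so by the second Borel--Cantelli lemma $m_i>i^{1/\alpha}$ infinitely often; as $i^{1/\alpha}\gg i^{1/\alpha-1}\log i$, the upper profile is violated on an infinite set. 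Likewise $\phi(2)=\alpha>0$ gives $m_i=0$ infinitely often, destroying the lower profile. Theorem~\ref{th:ashaus} controls only the partial sums $\sum_{j\le R}m_j$, not the individual terms, and it is only partial--sum information that enters your time decomposition $T_r\asymp\sum_j \ell_{v_j}(1+m_j)$. A workable repair is to feed the partial--sum bounds directly into the local--time estimate (for instance via summation by parts, writing $\sum_{j\le M}\ell_{v_j}m_j$ in terms of $\sum_{j\le k}m_j$ and differences of local times), rather than attempt a term--by--term comparison of caterpillars. This is feasible, but carrying it out rigorously with two--sided control on $\mathbb{P}_{v_1}(T_r>x)$ is substantially more work than the paper's proof.
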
 \noindent
The theorem is proved on page \pageref{pr:spec} in Appendix B. 

\section{Vertex degree distribution and correlations}
In this section we use results from the vertex splitting model \cite{vs} to calculate the vertex degree distribution and correlation between the degrees of neighbouring vertices in the ag--model. Not all results in this section are rigorous and we will comment on this point below.

Let $X_{i,n}$ be the number of vertices of degree $i$ in a random tree with $n$ edges and define the vertex degree densities
\begin {equation}
 \rho_i = \lim_{n\rightarrow\infty} \frac{\mathbb{E}(X_{i,n})}{n}.
\end {equation}
It was shown in \cite[Section 2]{vs} that the densities in the vertex splitting model satisfy the linear equation
\begin{equation} \label{mf}
 \rho_i = -\frac{w_i}{w_2}\rho_i + \sum_{j\geq k-1} \frac{j w_{i,j+2-i}}{w_2}\rho_j
\end{equation}
assuming that the splitting weights are linear $w_i = Ai+B$ and under certain technical conditions on the partitioning weights. The splitting weights are linear in the ag--model, cf.~(\ref{splittingweights}), and one can check that for small $D$ the technical conditions needed on the partitioning weights are fulfilled. However, it is not certain whether (\ref{mf}) holds for general $D$ and one would need further analysis to verify that.  It is straightforward to solve (\ref{mf}) for the weights (\ref{partw}) and thus we find that the vertex degree densities in the ag--model are given by
\begin {equation}
 \rho_1 = \frac{1-\alpha}{2-a-\alpha}
\end {equation}
and
\begin {equation} \label{ddist}
 \rho_k = \frac{(1-a)\G{\frac{a+2-\alpha}{a}}\G{k-2+\frac{1-\alpha}{a}}}{(2-a-\alpha)(2-\alpha)\G{\frac{1-\alpha}{a}}\G{k-1+\frac{2-\alpha}{a}}}, \qquad {k\geq 2}
\end {equation}
provided that (\ref{mf}) holds. By sending $\alpha$ to zero we find that these results agree with results previously obtained in the preferential attachment model \cite{albert0,rapoport}. Also note, that in the case $D=\infty$, $\rho_k$ has in general a power law behaviour except when $\gamma = 1$ ($a=0$) in which case it falls of exponentially with rate $(1-\alpha)/(2-\alpha)$. This resembles properties of the degree distribution of the vertices on the spine, cf.~(\ref{eq:degexp}) and (\ref{eq:degpower}).

Let $X_{ij,n}$ be the number of edges with endpoints of degree $i$ and $j$ in a random tree with $n$ edges, using the convention that the vertex of degree $i$ is the one closer to the root. Define the density
\begin  {equation} \label{limitcorr}
 \rho_{ij} = \lim_{n\rightarrow\infty}\frac{\mathbb{E}(X_{i,j,n})}{n}.
\end  {equation}
It was shown in \cite{vs} that these densities in the vertex splitting model satisfy
\begin {eqnarray} \nonumber
 \rho_{jk} &=& -\frac{w_j+w_k}{w_2}\rho_{jk} + (j-1)\frac{w_{j,k}}{w_2}\rho_{j+k-2} + (j-1)\sum_{i\geq j-1} \frac{w_{j,i+2-j}}{w_2} \rho_{ik}\\
 && +~ (k-1)\sum_{i\geq k-1} \frac{w_{k,i+2-k}}{w_2}\rho_{ji} \label{rhojk}
\end {eqnarray}
assuming that the limit (\ref{limitcorr}) exists. The densities give us information about correlation between vertex degrees of neighbouring vertices. It can be measured with a correlation coefficient
\begin {equation}
 r = \frac{\sum_{j,k}(j-1)(k-1)(\bar{\rho}_{jk}-\bar{\rho}_j\bar{\rho}_k)}{\sum_{j}(j-1)^2\bar{\rho}_j - \left(\sum_{j}(j-1)\bar{\rho}_j\right)^2}
\end {equation}
where
\begin {equation}
 \bar{\rho}_k = \frac{k\rho_k}{\sum_i i\rho_k} \qquad \text{and} \qquad \bar{\rho}_{ij} = \frac{\rho_{ij}+\rho_{ji}}{2}.
\end {equation}
The coefficient $r$ takes values between $-1$ and $1$. If $r < 0$ the graph is said to show disassortative mixing and vertices with high degree prefer to be neigbours of vertices with low degree. If $r > 0$ the graphs are said to show assortative mixing and vertices with high degree prefer to be neighbours of vertices with high degree, see e.g.~\cite{newman}.

We will conlude this section by calculating $r$ for two choices of parameters in the ag--model, namely $D=3$ and $D=\infty$, $\gamma = 1$. We consider (\ref{rhojk}) with the weights given in (\ref{partw}). In the case $D=3$, (\ref{rhojk}) can be explicitly solved and we find that
\begin {equation} \label{r1}
r = -\frac{\alpha(4-3\alpha)}{(5-3\alpha)(3-\alpha)},	\qquad (D=3).
\end {equation}
This can of course be repeated for small values of $D$. However, when $D$ is large or infinite it is more difficult to solve (\ref{rhojk}) explicitly. Instead, we study the generating function
\begin {equation}
 S(x,y) = \sum_{j,k\geq 2} \bar{\rho}_{jk} x^{j-1} y^{k-1}
\end {equation}
and use the fact that
\begin {equation}
 \partial_x\partial_y S(1,1) = \sum_{j,k}(j-1)(k-1)\bar{\rho}_{jk}
\end {equation}
to calculate $r$. Equation (\ref{rhojk}) becomes a linear, first order partial differential equation in terms of the generating function $S(x,y)$. It can in principle be solved for a general set of parameters, however we only comment on the case $D=\infty$, $\gamma = 1$. In that case, the coefficients of the derivative terms in the PDE are zero and we get an ordinary equation for $S(x,y)$. The solution is
\begin {equation}
S(x,y) = \frac{\sigma (x^2 f(y)+y^2 g(x))}{1- \sigma x - \sigma y} + x f(y) + y g(x) + \rho_{22} x y
\end {equation}
where
\begin {equation}
 f(y) = \sum_{k=3}^\infty \rho_{2k} y^{k-1} \quad \text{and} \quad g(x) = \sum_{j=3}^\infty \rho_{j2} x^{j-1},
\end {equation}
\begin {eqnarray} \nonumber
\rho_{2k} &=& \frac{2(1-3\alpha + \alpha^2)}{(1-\alpha)(2-\alpha)^2} \sigma^k+\frac{\alpha}{(1-\alpha)^2} \eta^k,\qquad  k\geq 2, \\ \nonumber
\rho_{j2} &=& -\frac{2\alpha^4-15\alpha^3+38\alpha^2-37\alpha+8+(2\alpha^2-6\alpha +4)j}{(1-\alpha)^2(2-\alpha)^2}  \sigma^j \\ \nonumber
&& +~  \frac{2\alpha^2-6\alpha +2 + \alpha j}{(1-\alpha)^2} \eta^j, \qquad  j\geq 3,
\end {eqnarray}
with
\begin {equation}
 \eta = \frac{1-\alpha}{2-\alpha} \quad \text{and} \quad \sigma = \frac{1-\alpha}{3-\alpha}.
\end {equation}
From these expressions we find that
\begin {equation} \label{r2}
 r = -\frac{\alpha}{2(1+\alpha)} \qquad (D=\infty, \gamma = 1).
\end {equation}
We plot the solutions (\ref{r1}) and (\ref{r2}) together in Fig.~\ref{f:plot}.
\begin{figure} [!h]
\centerline{\scalebox{0.20}{\includegraphics{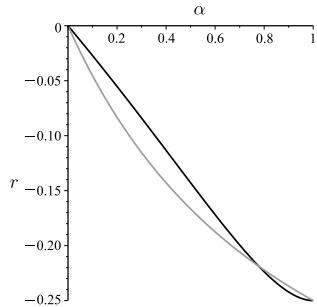}}}
\caption{Comparison of Equations (\ref{r1}) ($D=3$, black) and (\ref{r2}) ($D=\infty$, $\gamma=1$, gray).} \label{f:plot}
\end{figure}
The two curves are similar in both cases. If $\alpha = 0$ then $r=0$ which agrees with results which have previously been obtained for the preferential attachment model \cite{newman}. In this case the vertices which are close to the root are 'old' in the sense that once they reach the maximum degree (which they eventually do with probability one) they do not change again. Thus, a lot of vertices of high degree become neigbours. When $\alpha$ is increased above zero, a repulsions is introduced between these vertices, the value of $r$ decreases and the trees show disassortative mixing. When $\alpha$ goes to 1, the trees approach the same non--random graph, a spine with no outgrowths. As a consequence, the value of $r$ approaches the same value in both cases.  
\section{Conclusions}
We introduced the ag--model, a special case of the vertex splitting model which has the Markov branching property. For particular choices of parameters it reduces to models of generic trees \cite{sdgt}, preferential attachment \cite{albert0} and non--generic caterpillars \cite{jonsson:2009}. It was proved that the finite volume measures generated by the growth rules converge to a measure which is concentrated on the set of trees with exactly one spine and the limiting measure was described explicitly. The same has been done before in Ford's $\alpha$--model \cite{siggi} and a special case of the $\alpha \gamma$--model \cite{phdsos}. Extension of these convergence results to the vertex splitting model is a work in progress. 

There is another notion of convergence of random trees, referred to as the {\it scaling limit}, see e.g.~\cite{aldousoriginal,continuum}. This means, roughly, that a random tree $T_n$ viewed as a metric space with the graph metric $d_{\text{gr}}$ suitably scaled, converges weakly, in the Gromov--Hausdorff topology, to a continuum random tree.  In a recent paper on Markov branching trees \cite{markovss}, Haas and Miermont proved that under certain natural conditions on the first split distributions the scaling limit of the trees is a self--similar fragmentation tree, in the Gromov--Hausdorff-Prokhorov topology. We expect that this theory applies to the model studied in this paper and it would be interesting to confirm that. Moreover, it is an interesting and challenging problem to generalize the results on the scaling limit to the vertex splitting model when Markov branching is absent.

 The annealed Hausdorff dimension, with respect to the infinite volume measure of the ag--model, was calculated for a certain range of the parameters.  The results partly support scaling assumptions which were made when calculating the fractal dimension in the vertex splitting model \cite{vs}. In the special case of growing caterpillar graphs we calculated, almost surely, the Hausdorff and spectral dimension. It turns out that the dimensions are related by the formula
\begin {equation}
 d_s = \frac{2 d_H}{1+d_H}.
\end {equation}
This equation holds in general for tree models which satisfy a certain uniformity condition and under the assumption that vertex degrees are uniformly bounded from the above \cite{barlow:2005}. We expect this relation to hold in the ag--model and it would be desirable to check whether it holds in the vertex splitting model.

It is possible to study other interesting observables in the ag--model such as the vertex degree distribution and correlations between degrees of neighbouring vertices in large trees. It would be interesting to give a rigorous proof of (\ref{ddist}) and even to get stronger convergence results for the random variables $X_{i,n}$. This can presumably be done, at least for some range of the parameters, using results on generalized Pólya urns \cite{svante}. Furthermore, it would be interesting to confirm the validity of (\ref{mf}) for as general set of parameters as possible. Similar results about the convergence of $X_{ij,n}$ are desirable.

 A natural question is whether the ag--model is the only special case of the vertex splitting model which has the Markov branching property. As was noted in the introduction, the $\alpha \gamma$--model has the Markov branching property but it is not strictly a special case of the vertex splitting model, rather a limiting case. Since the vertex splitting model has local and isotropic growth rules, one might also ask whether there exists some other notion of self--similarity which could be used to handle the general case. An understanding of this could be a key element towards a solution of the most general case.

\begin {ack}
 I am deeply indebted to Thordur Jonsson and François David for helpful discussions and comments.
\end {ack}

 \appendix
\section*{Appendices}
\section {Solution of the first split distributions}
In this section we describe a 'network flow method' for solving the recursion equations for the first split distribution given in Proposition \ref{p:MB1}. We encountered this method in \cite{Ford} where it was used to solve recursion equations for the first split distribution of Ford's $\alpha$--model. 

First of all, it is straightforward to derive (\ref{qfuncgengen}) in the case $k=2$. Consider next the case $k=3$ in which case the nearest neighbour of the root has two disjoint subtrees of descendants which we refer to as the left and right subtree. We represent a state when the tree has $n_1$ edges in the left subtree and $n_2$ edges in the right subtree by a node $(n_1,n_2)$, $n_1,n_2 \geq 1$ in the network in Fig.~\ref{f:network}. 
\begin{figure} [!h]
\centerline{\scalebox{0.75}{\includegraphics{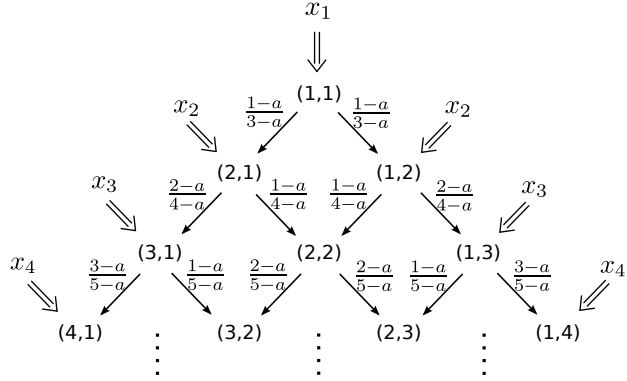}}}
\caption{A network flow diagram with sources $x_i$, $i\geq1$.} \label{f:network}
\end{figure}
We assign {\it conductance }
\begin {equation}
C((n_1-1,n_2) \rightarrow (n_1,n_2)) = \frac{n_1-1-a}{n_1+n_2-a}
\end {equation}
between states $(n_1-1,n_2)$ and $(n_1,n_2)$, which is the probability of going from state $(n_1-1,n_2)$ to state $(n_1,n_2)$ (symmetric in $n_1$ and $n_2$) given by the recursion (\ref{qrec2}).  Let $\omega = (\omega_0, \omega_1,\ldots,\omega_{|\omega|})$ be a simple path (non--backtracking path) in the network with endpoints $\omega_0$ and $\omega_{|\omega|}$ where $|\omega|$ denotes the length of the path. Note that the product of conductance along a path $\omega$ from a state $(i,1)$ (or $(1,i)$) to a state $(n_1,n_2)$ is given by
\begin {equation}
 \prod_{j=1}^{|\omega|} C(\omega_{j-1}\rightarrow\omega_{j}) = \frac{ (i-a)(i+1-a)\G{n_1-a}\G{n_2-a}}{\G{n_1+n_2+1-a}\G{1-a}}
\end {equation}
and is independent of the path chosen. 
We define the {\it value} of a state $(n_1,n_2)$ as $q_3(n_1,n_2)$ and we define $x_{n_1}$ as a source which flows into state $(n_1,1)$ (or $(1,n_1)$). We can write
\begin {eqnarray} \nonumber
 q_3(n_1,n_2) &=& C((n_1-1,n_2) \rightarrow (n_1,n_2)) q_3(n_1-1,n_2)(1-\delta_{n_1,1}) \\ \nonumber 
&+& C((n_1,n_2-1) \rightarrow (n_1,n_2)) q_3(n_1,n_2-1) (1-\delta_{n_2,1}) \\ \nonumber
&+&x_{n_1} \delta_{n_2,1} + x_{n_2} \delta_{n_1,1} \\ \label{flowrelation}
\end {eqnarray}
i.e.~$q_3(n_1,n_2)$ is given by the flow from the neighbouring sources and the neighbouring states, weighted by the conductance between the states. By comparing (\ref{flowrelation}) to (\ref{qrec2}) we find that 
\begin {equation}
x_1 = q_3(1,1) = 1-q_2(2) = \frac{1-\alpha}{2-a} 
\end {equation}
and
\begin {equation}
x_{n_1} = \frac{\overline{w}_2 q_2(n_1)}{2 W(n_1+1)}=  \frac{(1-\alpha)(\alpha n_1 - \alpha + 1 -a)}{2(n_1+1-a)(n_1-a)}
\end {equation}
for $n_1 \geq 2$.

From (\ref{flowrelation}) we conclude that $q_3(n_1,n_2)$ is given by the sum over all paths from all sources which lead to the state $(n_1,n_2)$, where each path is weighted by the product of the conductance along the path, i.e.~
\begin {eqnarray} \nonumber
 q_3(n_1,n_2) &=&  \sum_{i=1}^{n_1} x_i \sum_{\substack{\text{paths $\omega$ from $(i,1)$} \\ \text{to $(n_1,n_2)$}}} \prod_{j=1}^{|\omega|} C(\omega_{j-1}\rightarrow\omega_{j}) \\  &+& \sum_{i=2}^{n_2} x_i \sum_{\substack{\text{paths $\omega$ from $(1,i)$} \\ \text{to $(n_1,n_2)$}}} \prod_{j=1}^{|\omega|} C(\omega_{j-1}\rightarrow\omega_{j}) \label{qsum}
\end {eqnarray}

Since the conductance along a path between given states is independent of the path chosen, we can take the product outside the inner sum  and we are simply left with a counting problem. The number of paths between $(i,1)$ and $(n_1,n_2)$ is $\binom{n_1+n_2-i-1}{n_1-i}$ and the number of paths between $(1,i)$ and $(n_1,n_2)$ is $\binom{n_1+n_2-i-1}{n_2-i}$. We can now easily perform the sums over $i$ and we recover (\ref{qfuncgengen}) for $k=3$. 

This argument can be generalized to higher values of $k$ and it yields the formula (\ref{qfuncgengen}).

\section {Proof of main theorems}

In this section we collect proofs of theorems and lemmas stated in the main part of the paper. We need the following two lemmas in the proof of Theorem \ref{th:conv}. 

\begin {lemma} \label{l:explicit}
For $n\geq k$,
\begin {eqnarray} \nonumber
  && \sum_{n_1+\cdots+n_{k-1} = n-1} q_k(n_1,\ldots,n_{k-1})=  \frac{\alpha \G{\frac{1+a}{a}}\G{k-2+\frac{1-\alpha}{a}}}{\G{\frac{1-\alpha}{a}}\G{k-1+\frac{1}{a}}} \\  \nonumber
&& +~ \frac{\G{1-a}\G{k-2+\frac{1-\alpha}{a}}}{\G{\frac{1-\alpha}{a}}\G{n-a}} \sum_{i=1}^{k-1} \left(1-a-\alpha+\frac{\alpha a i}{a i -a +1}\right) \frac{(-1)^{i+1} \G{n-1-a i}}{\G{i}\G{k-i}\G{1-a i}}. \\ 
\label{longq}
\end {eqnarray}
\end {lemma}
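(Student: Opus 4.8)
The statement is a closed-form evaluation of the sum of the explicit first split distribution (\ref{qfuncgengen}) over all compositions $n_1+\cdots+n_{k-1}=n-1$, so I would simply carry out that sum directly by generating functions. The key object is
$f(z):=\sum_{m\ge1}\tfrac{a\Gamma(m-a)}{\Gamma(1-a)\Gamma(m+1)}z^{m}=1-(1-z)^{a}$, which is identity (\ref{sumofgamma}) with $c=a$; the same identity, read off coefficient-wise, gives $[z^{N}](1-z)^{s}=-\tfrac{s\,\Gamma(N-s)}{\Gamma(1-s)\,\Gamma(N+1)}$ for every $N\ge1$ (positive-integer exponents being covered by the fact that $n-1>a(k-1)\ge ai$, so the relevant $\Gamma$'s stay finite). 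Setting the total to $n-1$ in (\ref{qfuncgengen}) and pulling the composition-independent prefactor $P=\tfrac{\Gamma(k-2+\frac{1-\alpha}a)}{\Gamma(\frac{1-\alpha}a)\Gamma(k)}\tfrac{\Gamma(1-a)\Gamma(n)}{a\Gamma(n-a)}$ out of the sum, the bracket $1-a-\alpha+\alpha\sum_i\tfrac{n_i}{n-n_i}$ splits the task into evaluating $S_0:=\sum_{n_1+\cdots+n_{k-1}=n-1}\prod_i\tfrac{a\Gamma(n_i-a)}{\Gamma(1-a)\Gamma(n_i+1)}$ and, by symmetry of the product in the $n_i$, the quantity $(k-1)T$ with $T:=\sum_{n_1+\cdots+n_{k-1}=n-1}\prod_i\tfrac{a\Gamma(n_i-a)}{\Gamma(1-a)\Gamma(n_i+1)}\,\tfrac{n_1}{n-n_1}$.

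Next I would compute the two pieces. For $S_0$ one just has $S_0=[z^{n-1}]f(z)^{k-1}$; expanding $(1-(1-z)^{a})^{k-1}$ by the binomial theorem, the $i=0$ term contributes $[z^{n-1}]1=0$ and the coefficient identity yields $S_0=\tfrac{a\Gamma(k)}{\Gamma(n)}\sum_{i=1}^{k-1}\tfrac{(-1)^{i+1}\Gamma(n-1-ai)}{\Gamma(i)\Gamma(k-i)\Gamma(1-ai)}$ after using $i\binom{k-1}{i}=\Gamma(k)/(\Gamma(i)\Gamma(k-i))$. For $T$, note that inserting the extra factor $n_1$ replaces $f$ by $zf'(z)=az(1-z)^{a-1}$ in the first slot, while weighting the sum of the remaining $k-2$ coordinates by $\tfrac1{n-n_1}=\tfrac1{m+1}$ (with $m=n-1-n_1$) replaces $f(z)^{k-2}$ by $\tfrac1z\int_0^z f(w)^{k-2}\,dw$; hence $T=[z^{n-1}]\bigl(a(1-z)^{a-1}\int_0^z f(w)^{k-2}\,dw\bigr)$. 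Expanding $f(w)^{k-2}$ and integrating termwise gives $\int_0^z f(w)^{k-2}\,dw=\sum_{j=0}^{k-2}\binom{k-2}{j}\tfrac{(-1)^{j}}{aj+1}\bigl(1-(1-z)^{aj+1}\bigr)$, and applying the coefficient identity once more (with $s=a-1$ and $s=a(j+1)$, noting $[z^{n-1}](1-z)^{a-1}=\tfrac{\Gamma(n-a)}{\Gamma(1-a)\Gamma(n)}$) produces a closed form for $T$; reindexing $i=j+1$ introduces the factor $\tfrac1{ai-a+1}$.

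Finally I would assemble $P\bigl[(1-a-\alpha)S_0+\alpha(k-1)T\bigr]$. The terms proportional to $\Gamma(n-1-ai)$ coming from $S_0$ and from $T$ merge, the $i$-th coefficient becoming $1-a-\alpha+\tfrac{\alpha a i}{ai-a+1}$, and after multiplication by $P$ this is exactly the second sum in (\ref{longq}). The terms proportional to $\Gamma(n-a)$ occur only in $T$ and collapse (the $n$-dependence cancelling against $P$) to $\tfrac{\alpha\Gamma(k-2+\frac{1-\alpha}a)}{\Gamma(\frac{1-\alpha}a)}\sum_{i=1}^{k-1}\tfrac{(-1)^{i+1}}{(ai-a+1)\Gamma(i)\Gamma(k-i)}$; writing $\tfrac1{ai-a+1}=\int_0^1 x^{a(i-1)}\,dx$ turns the $i$-sum into $\tfrac1{(k-2)!}\int_0^1(1-x^{a})^{k-2}\,dx$, and the substitution $u=x^{a}$ evaluates this Beta integral as $\tfrac{\Gamma(\frac{1+a}{a})}{\Gamma(k-1+\frac1a)}$, which is the first term of (\ref{longq}). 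The only genuine obstacle is the Gamma-function bookkeeping and signs, together with two minor points: for $D<\infty$ ($a<0$) the integral $\int_0^1 x^{a(i-1)}\,dx$ still converges since $i-1\le k-2\le D-2<1/|a|$, and the finitely many parameter values where $aj+1=0$ or some $ai\in\mathbb{Z}_{>0}$ are handled by continuity in $\alpha$ (in particular one may assume $a\neq0$, the case $a=0$ being a limit). An alternative, slightly less transparent route is to derive from (\ref{qrec2}) the recursion $Q_k(n)=\tfrac1{n-1-a}\bigl[(n-2-(k-1)a)Q_k(n-1)+\overline{w}_{k-1}Q_{k-1}(n-1)\bigr]$ for $Q_k(n):=\sum_{n_1+\cdots+n_{k-1}=n-1}q_k$ and solve it by induction on $k$, starting from $Q_2(n)=q_2(n-1)$.
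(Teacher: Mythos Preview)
Your approach is essentially the same as the paper's: both compute the generating function $\sum_n A_{n,k}\zeta^n$ for the weighted composition sum, split the bracket into the constant piece $(1-a-\alpha)\bigl(1-(1-\zeta)^a\bigr)^{k-1}$ and the $\alpha$-piece $a\alpha(k-1)(1-\zeta)^{a-1}\int_0^\zeta\bigl(1-(1-\zeta')^a\bigr)^{k-2}\,d\zeta'$, binomially expand, and read off coefficients; your $S_0$ and $(k-1)T$ are exactly these two pieces. The only point worth flagging is your Beta-integral evaluation of the constant term: the substitution $u=x^a$ is clean only for $a>0$, whereas for $D<\infty$ one has $a<0$ and $1-x^a<0$ on $(0,1)$; it is safer to invoke the standard identity $\sum_{j=0}^{k-2}\binom{k-2}{j}\tfrac{(-1)^j}{j+1/a}=\tfrac{\Gamma(k-1)\Gamma(1/a)}{\Gamma(k-1+1/a)}$ directly (a rational-function identity in $a$, or your continuity-in-$\alpha$ argument), which immediately gives the first term of (\ref{longq}).
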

\begin {proof}
The proof follows from a generating function argument. Define
\begin {equation}
 A_{n,k} = \sum_{\substack{n_1+\cdots+n_{k-1} = n-1\\n_i \geq 1}} \prod_{i=1}^{k-1} \frac{a \Gamma(n_i-a)}{\G{1-a} \G{n_i+1}} \left(1-a-\alpha + \alpha \sum_{i=1}^{k-1} \frac{n_i}{1+\sum_{j\neq i} n_j}\right).
\end {equation}
Then
\begin {equation} \label{qanda}
  \sum_{n_1+\cdots+n_{k-1} = n-1} q_k(n_1,\ldots,n_{k-1})=  \frac{\G{k-2+\frac{1-\alpha}{a}}\G{1-a}\G{n}}{\G{\frac{1-\alpha}{a}}\G{k}a\G{n-a}}  A_{n,k}.
\end {equation}
By (\ref{sumofgamma}),
\begin {eqnarray} \nonumber
 \sum_{n=k}^\infty A_{n,k}\zeta^n &=& \zeta \Big((1-a-\alpha) \left(1-(1-\zeta)^{a}\right)^{k-1}\\ \nonumber 
&& +~ a \alpha (k-1)(1-\zeta)^{a-1} \int_0^\zeta \left(1-(1-\zeta')^{a}\right)^{k-2} d\zeta'\Big) \\ \nonumber
&& = \zeta\Big(\frac{\alpha a \G{k}\G{\frac{1+a}{a}}}{\G{k-1+\frac{1}{a}}} (1-\zeta)^{a-1} \\ \nonumber
&& + ~ \sum_{i=0}^{k-1}  \left(1-a-\alpha + \frac{\alpha a i}{a i - a + 1}\right)\binom{k-1}{i}(-1)^{i} (1-\zeta)^{a i}\Big).\\ \label{genank}
\end {eqnarray}
From the last expression we  can determine the coefficients $A_{n,k}$ and inserting them into (\ref{qanda}) completes the proof.
\end {proof}

\begin {lemma} \label{l:dom}
For $\alpha > 0$,
\begin {eqnarray} \nonumber
&&\lim_{n\rightarrow \infty} \sum_{k=2}^\infty  \sum_{n_1+\cdots+n_{k-1} = n-1} q_k(n_1,\ldots,n_{k-1}) \\ 
&& = \sum_{k=2}^\infty \lim_{n\rightarrow \infty} \sum_{n_1+\cdots+n_{k-1} = n-1} q_k(n_1,\ldots,n_{k-1}) = 1. \label{sumlimitqgengen}
\end {eqnarray}

\end {lemma}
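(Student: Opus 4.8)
Here the plan is to identify both sides of (\ref{sumlimitqgengen}) \emph{independently} as $1$, so that the interchange of limit and sum requires no tightness or dominated--convergence argument at all.

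First I would dispose of the left--hand side. For each fixed $n\geq 2$ only finitely many terms are nonzero, since $n_i\geq 1$ forces $k\leq n$, and the first split distribution is, for that $n$, an honest probability distribution over the possible root--splittings of a tree in $\mathcal{T}_n$: applying the linear ``total mass'' functional $\tau\mapsto 1$ to the Markov branching identity (\ref{mbproperty}) for $P_n=\nu_n$ gives $\sum_{k}\sum_{n_1+\cdots+n_{k-1}=n-1}q_k(n_1,\ldots,n_{k-1})=1$, because $\nu_n$ has total mass $1$ and, by multilinearity of $\ast$, each $\emptyset\ast P_{n_1}\ast\cdots\ast P_{n_{k-1}}$ has total mass $\prod_i\big(\sum_{\tau_i}\nu_{n_i}(\tau_i)\big)=1$. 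Hence the sequence whose limit appears on the left of (\ref{sumlimitqgengen}) is constant and equal to $1$.

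Next I would compute the termwise limits. Fixing $k$ and invoking Lemma \ref{l:explicit}: the first line of (\ref{longq}) is exactly $\phi(k)$ as in (\ref{phiinfgen}) and is independent of $n$, while every summand on the second line equals an $n$--independent factor times $\Gamma(n-1-ai)/\Gamma(n-a)$, and (\ref{ordergamma}) gives $\Gamma(n-1-ai)/\Gamma(n-a)=n^{a(1-i)-1}(1+O(n^{-1}))$. For $\alpha>0$ the exponent $a(1-i)-1$ is negative for every $1\leq i\leq k-1$: it equals $-1$ when $i=1$; it is $\leq -1$ when $a\geq 0$ and $i\geq 2$; and when $a<0$ (so $D<\infty$ and $i\leq D-1$) one has $|a|(i-1)\leq |a|(D-2)=1-\alpha<1$. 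Since the inner sum has finitely many terms, $\sum_{n_1+\cdots+n_{k-1}=n-1}q_k(n_1,\ldots,n_{k-1})\to\phi(k)$, so the middle quantity in (\ref{sumlimitqgengen}) equals $\sum_{k\geq 2}\phi(k)$. The remaining task, and the only genuine computation, is to show $\sum_{k\geq 2}\phi(k)=1$: writing $\beta=(1-\alpha)/a$ and $c=\alpha/a$ one has $\phi(k)=\frac{\alpha\,\Gamma((1+a)/a)}{\Gamma((1-\alpha)/a)}\,\frac{\Gamma(k-2+\beta)}{\Gamma(k-1+\beta+c)}$, and the telescoping identity $\frac{\Gamma(j+\beta)}{\Gamma(j+\beta+1+c)}=\frac1c\big(\frac{\Gamma(j+\beta)}{\Gamma(j+\beta+c)}-\frac{\Gamma(j+1+\beta)}{\Gamma(j+1+\beta+c)}\big)$, proved exactly as (\ref{sumofgamma}), sums the series to $\frac{\Gamma(\beta)}{c\,\Gamma(\beta+c)}=\frac{a\,\Gamma((1-\alpha)/a)}{\alpha\,\Gamma(1/a)}$; multiplying by the prefactor and using $\Gamma((1+a)/a)=\frac1a\Gamma(1/a)$ yields $1$. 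When $D=\infty$ and $\gamma<1$ the series converges since $c=\alpha/a>0$; when $D<\infty$ it is a finite sum (here $\beta=-(D-2)$ is a negative integer and $\phi(k)=0$ for $k>D$) and the same identity applies with Gamma ratios at poles read as limits, equivalently via Chu--Vandermonde; the degenerate case $a=0$ reduces to the elementary geometric sum $\sum_{k\geq 2}\alpha(1-\alpha)^{k-2}=1$ (or $\phi(2)=1$ when $\alpha=1$). Combining the three points, all three quantities in (\ref{sumlimitqgengen}) equal $1$.

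The main obstacle is the evaluation $\sum_{k\geq 2}\phi(k)=1$ in the last step: nothing there is deep, but one must treat uniformly the convergent series ($a>0$), the terminating sum with its Gamma--function poles ($a<0$, $D<\infty$), and the degenerate case $a=0$. Once both endpoints of (\ref{sumlimitqgengen}) have been separately pinned to $1$, the apparent interchange of limit and infinite sum is automatic.
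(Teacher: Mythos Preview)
Your proof is correct and follows essentially the same route as the paper's: both identify the left--hand side as the constant sequence $1$ (since the $q_k$ form a probability distribution for each $n$), use Lemma~\ref{l:explicit} to see that the second term of (\ref{longq}) vanishes as $n\to\infty$, and then verify $\sum_{k\geq 2}\phi(k)=1$. You are simply more explicit where the paper is terse---you spell out why the exponent $a(1-i)-1$ is negative in every regime of $a$, and you give the telescoping identity for $\sum_k\phi(k)$ rather than deferring to a hypergeometric evaluation---but there is no substantive difference in strategy.
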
 
\begin {proof}
We obtain the following limit from (\ref{longq})
\begin {equation} \label{limitqgengen}
\lim_{n\rightarrow \infty} \sum_{n_1+\cdots+n_{k-1} = n-1} q_k(n_1,\ldots,n_{k-1}) =   \frac{\alpha \G{\frac{1+a}{a}}\G{k-2+\frac{1-\alpha}{a}}}{\G{\frac{1-\alpha}{a}}\G{k-1+\frac{1}{a}}}
\end {equation}
 since the second term in (\ref{longq}) converges to zero as $n\rightarrow\infty$. The second equality in (\ref{sumlimitqgengen}) then follows in a straightforward way (e.g.~by representing the sum of the right hand side of (\ref{limitqgengen}) by a hypergeometric function, see for instance \cite{abram}). The first equality is then trivial since by definition
\begin {equation}
 \sum_{k=2}^\infty  \sum_{n_1+\cdots+n_{k-1} = n-1} q_k(n_1,\ldots,n_{k-1}) = 1.
\end {equation}
  
\end {proof}

\begin {proof}[Proof of Theorem \ref{th:conv}] \label{pr:conv}
Due to properties of the metric space $(\mathcal{T},d)$ (see \cite{bergfinnur},\cite{ngtrees}), it is sufficient to show that the sequence
\begin {equation} \label{convseq}
\big(\nu_n(\{\tau \in \mathcal{T} : \tau_0 \text{ is a left subtree of $\tau$}\})\big)_{n\in \mathbb{N}}
\end {equation}
converges for any finite tree $\tau_0$. We will write $\nu_n(\tau_0)$ as a shorthand for (\ref{convseq}). We proceed by induction on $|\tau_0|$. Since every tree has the single rooted edge $s$ as a left subtree, (\ref{convseq}) clearly converges when $|\tau_0|=1$. Assume that it converges for all finite trees $\tau$ for which $|\tau| \leq N$ and denote the limits by $\nu(\tau)$. Take a tree $\tau_0$ such that $|\tau_0| = N+1$.  Denote the degree of the nearest neighbour of the root in $\tau_0$ by $k$ and decompose $\tau_0$ as $\tau_0 = \emptyset \ast \tau_1 \ast \cdots \ast \tau_{k-1}$. To simplify notation, we define $\tau_i = s$ for $i\geq k$, such that $\nu_n(\tau_i) = 1$ when $i \geq k$. From the Markov branching property we find that
\begin {eqnarray} \nonumber
\nu_{n}(\tau_0) &=& \sum_{\ell = k}^{\infty}\sum_{\substack{n_1 + \cdots + n_{\ell-1} = n-1}}q_\ell(n_1,\ldots,n_{\ell-1})  \prod_{j=1}^{\ell-1} \nu_{n_j}(\tau_j) \\\label{limgengenstep1}
\end {eqnarray}
see Fig.~\ref{f:measure}. We need to take the $n\rightarrow \infty$ limit of the above equation. First note that
\begin {eqnarray*}
 &&\sum_{\substack{n_1 + \cdots + n_{\ell-1} \\ = n-1}}q_\ell(n_1,\ldots,n_{\ell-1}) \prod_{j=1}^{\ell-1} \nu_{n_j}(\tau_j) \leq \sum_{\substack{n_1+\cdots+n_{\ell-1} \\ = n-1}}q_\ell(n_1,\ldots,n_{\ell-1})
\end {eqnarray*}
and thus, by Lemma \ref{l:dom} and dominant convergence we can swap the $n\rightarrow\infty$ limit and the sum over $\ell$ in (\ref{limgengenstep1}). 
\begin{figure} [!h]
\centerline{\scalebox{0.6}{\includegraphics{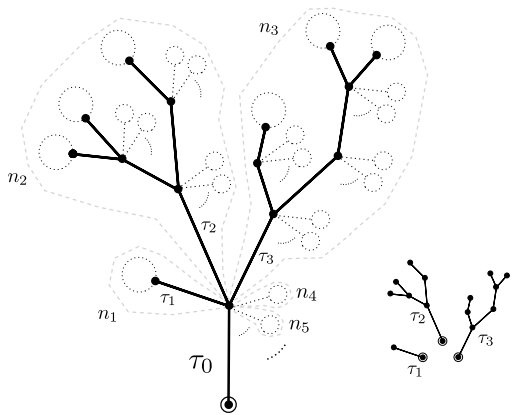}}}
\caption{A diagram explaining (\ref{limgengenstep1}). In this case $k=4$. The dotted circles and lines indicate all the possible trees which have $\tau_0 = \tau_1 \ast \tau_2 \ast \tau_3$ as a left subtree. For $i \leq k-1$, $n_i$ denotes the total volume of the tree which contains $\tau_i$ as a subtree and for $i\geq k$, $n_i$ denotes the total volume of each of the other trees which are attached to the nearest neighbour of the root.} \label{f:measure}
\end{figure}
Therefore we will now only consider one term in the sum over $\ell$ 
\begin {eqnarray} \nonumber
&& \sum_{\substack{n_1 + \cdots + n_{\ell-1} \\ = n-1}}q_\ell(n_1,\ldots,n_{\ell-1})  \prod_{i=1}^{\ell-1} \nu_{n_i}(\tau_i) \\ \nonumber
&& =  \frac{\Gamma\left(\ell-2+\frac{1-\alpha}{a}\right)\G{1-a}\G{n}}{\G{\frac{1-\alpha}{a}}\Gamma(\ell)a \G{n-a}} \sum_{\substack{n_1 + \cdots + n_{\ell-1}  = n-1 \\ n_i \geq 1, ~\forall i}} \prod_{i=1}^{\ell-1} \frac{a \G{n_i-a}}{\G{1-a}\G{n_i+1}} \nu_{n_i}(\tau_i)\\  \label{limgengenstep2}
&& \times ~\left(1-a - \alpha + \alpha \sum_{i=1}^{\ell-1}\frac{n_i}{n-n_i}\right). \label{stest} \nonumber \\
\end {eqnarray}
There is always some index $J$ such that $n_J \geq (n-1)/(\ell-1)$ in the above sum. First, consider the contribution from the constant term $1-a-\alpha$ inside the parentheses in (\ref{stest}), call it $\Sigma_1(n)$.
 Using (\ref{ordergamma})  we find the upper bound
\begin {eqnarray*}
 \Sigma_1(n) &\leq& C_1(\ell) n^a \left(\frac{n-1}{\ell-1}\right)^{-a-1} \sum_{m=\ell-2}^{n-2} \sum_{\substack{n_2+\cdots+n_{\ell-1} = m \\ n_i \geq 1, ~\forall i}}\prod_{j=2}^{\ell-1}\frac{a \G{n_j-a}}{\G{1-a}\G{n_j+1}} 
\end {eqnarray*}
where $C_1(\ell)> 0$ is independent of $n$ and we renamed $m = n-1-n_J$. Using (\ref{aandb}), (\ref{sumofgamma}) and a generating function argument as in Lemma \ref{l:explicit} we find that the summand in the sum over $m$ is $O(m^{-\alpha})$. Therefore $\Sigma_1(n) = O(n^{-\alpha})$ which goes to zero as $n \rightarrow \infty$. 

Next, consider the contribution from terms in the sum inside the  parentheses in (\ref{stest}) for which $i \neq J$, call it $\Sigma_2(n)$. It can be estimated from the above by
\begin {equation}
 \Sigma_2(n) \leq C_2(\ell) n^a \left(\frac{n-1}{\ell-1}\right)^{-a-2} \sum_{m=\ell-2}^{n-2} \sum_{\substack{n_2+\cdots+n_{\ell-1} = m \\ n_i \geq 1, ~\forall i}}n_2\prod_{j=2}^{\ell-1}\frac{a \G{n_j-a}}{\G{1-a}\G{n_j+1}} 
\end {equation}
where $C_2(\ell)> 0$ is independent of $n$. As before, we find that the summand in the sum over $m$ is $O(m^{-\alpha+1})$ and thus $\Sigma_2(n) = O(n^{-\alpha})$ which goes to zero as $n \rightarrow \infty$.

Finally, consider the contribution to (\ref{stest}) from the term in the sum inside the  parentheses in (\ref{stest}) for which $i = J$ and for which $\sum_{j\neq J} n_j > A$ where $A>\ell-2$. Call this contribution $\Sigma_3(n,A)$. We find that
\begin {equation}
 \Sigma_3(n,A) \leq C_3(\ell) n^a \left(\frac{n-1}{\ell-1}\right)^{-a}  \sum_{m > A} \frac{1}{m+1} \sum_{\substack{n_2+\cdots+n_{\ell-1} = m \\ n_i \geq 1, ~\forall i}}\prod_{j=2}^{\ell-1}\frac{a \G{n_j-a}}{\G{1-a}\G{n_j+1}} \label{sigma3est}
\end {equation}
where $C_3(\ell)> 0$ is independent of $n$. It is evident from (\ref{aandb}) and (\ref{sumofgamma}) that the sum over $m$ is convergent. Therefore, $\limsup_{ n\rightarrow\infty} \Sigma_3(n,A) \rightarrow 0$ as $A \rightarrow \infty$. Thus,
\begin {eqnarray} \nonumber
&& \limsup_{n\rightarrow\infty} \sum_{\substack{n_1 + \cdots + n_{\ell-1} \\ = n-1}}q_\ell(n_1,\ldots,n_{\ell-1})  \prod_{j=1}^{\ell-1} \nu_{n_j}(\tau_j) = \limsup_{n\rightarrow \infty} \Sigma_3(n,A) \\ \nonumber
&& +~\frac{\alpha \Gamma\left(\ell-2+\frac{1-\alpha}{a}\right)}{\G{\frac{1-\alpha}{a}}\Gamma(\ell)}\sum_{J=1}^{\ell-1} \nu(\tau_J) \sum_{m = \ell - 2}^{A} \frac{1}{m+1} \sum_{\substack{n_2+\cdots+n_{\ell-1} = m \\ n_i \geq 1, ~i \neq J,~ n_J = 0}}~\prod_{\substack{j=1 \\ j \neq J}}^{\ell-1}\frac{a \G{n_j-a}}{\G{1-a}\G{n_j+1}}\nu_{n_j}(\tau_j) \\ \nonumber
&& \xrightarrow[A \rightarrow \infty] {} \frac{\alpha \Gamma\left(\ell-2+\frac{1-\alpha}{a}\right)}{\G{\frac{1-\alpha}{a}}\Gamma(\ell)}\sum_{J=1}^{\ell-1} \nu(\tau_J) \sum_{m = \ell - 2}^{\infty} \frac{1}{m+1} \sum_{\substack{n_2+\cdots+n_{\ell-1} = m \\ n_i \geq 1, ~i \neq J,~ n_J = 0}}~\prod_{\substack{j=1 \\ j \neq J}}^{\ell-1}\frac{a \G{n_j-a}}{\G{1-a}\G{n_j+1}}\nu_{n_j}(\tau_j). \\ \label{convlimit}
\end {eqnarray}
It is clear the the same holds for the $\liminf$ and therefore (\ref{convseq}) converges for $\tau_0$. Furthermore, the last expression (\ref{convlimit}) along with the estimate (\ref{sigma3est}) on $\Sigma_3$ characterizes the measure $\nu$. The distribution of the vertex degrees on the spine also follows from (\ref{limitqgengen}).
\end {proof}
\begin {proof} [Proof of Lemma \ref{l:annhaus}] \label{pr:annhaus}
 The proof involves deriving and solving a differential equation for certain  generating functions. We start by defining a generating function for the quantity of interest
\begin {eqnarray}
H(x) &=& \sum_{R=1}^\infty \Big\langle \sum_i |B_R(\tau_i)| \Big\rangle_{\mu} x^R,\\
\end {eqnarray}
and we define the probability generating function for the volume of a graph ball of radius $R$ which is centered on a vertex on the spine and intersects the outgrowths from that vertex
\begin {eqnarray}
f_R(z) &=& \sum_{j=0}^\infty \mu\Big(\Big\{ (\tau_1,\tau_2,\ldots) : \sum_{i} |B_R(\tau_i)| = j\Big\}\Big) z^j. \\
\end {eqnarray}
It is furthermore convenient to define the following rescaled probability generating function, with respect to $\nu_n$, of the volume of a ball of radius $R$
\begin {eqnarray}
A_{R,n}(z) &=& \frac{\G{n-a}}{\G{1-a}\G{n+1}} \sum_{j=0}^\infty \nu_n\Big(\Big\{ \tau : |B_R(\tau)| = j\Big\}\Big) z^j, \\
\end {eqnarray}
and the corresponding generating function over $n$
\begin {eqnarray}
G_R(z,\zeta) &=& \sum_{n=1}^\infty A_{R,n}(z) \zeta^n. \\
\end {eqnarray}
We furthermore define
\begin {eqnarray} 
V_R(\zeta) &=& \partial_z G_R(z,\zeta)|_{z=1} \quad \text{and} \\
Q_x(\zeta) &=& \sum_{R=1}^\infty V_R(\zeta) x^R.
\end {eqnarray}
It is straightforward to show that
\begin {equation} \label{GR1}
G_R(1,\zeta) = \frac{1-(1-\zeta)^a}{a}
\end {equation}
for all $R\geq 1$. Using (\ref{phiinfgen}), (\ref{muinfgen}) and (\ref{mudef}) we find the relation
\begin {equation}
 f_R(z) = \alpha \int_{0}^1 \left(1-a G_R(z,\zeta)\right)^{\frac{\alpha-1}{a}} d\zeta
\end {equation}
and therefore, it follows from (\ref{GR1}) that
\begin {equation} \label{brinf}
\Big\langle \sum_i |B_R(\tau_i)| \Big\rangle_{\mu} = f'_R(1) = \alpha (1-\alpha) \int_0^1 V_R(\zeta') (1-\zeta')^{\alpha - a - 1} d\zeta'
\end {equation}
and
\begin {equation} \label{Hfunc}
 H(x) = \alpha (1-\alpha) \int_0^1 Q_x(\zeta) (1-\zeta')^{\alpha - a - 1} d\zeta'.
\end {equation}
Thus, knowing $Q_x(\zeta)$, we can expand $H(x)$ and find its coefficients, which completes the proof. We proceed as follows: Using the Markov branching property we can derive the following recursion
\begin {equation}
 n A_{R,n} =  z \sum_{k=2}^\infty \frac{\G{k-2+\frac{1-\alpha}{a}} a^{k-2}}{\G{\frac{1-\alpha}{a}} \G{k}} \sum_{\substack{n_1 + \cdots n_{k-1} = n-1 \\ n_i \geq 1, ~\forall i}}\left(1-a-\alpha + \alpha \sum_{j=1}^{k-1}\frac{n_j}{n-n_j}\right) \prod_{i=1}^{k-1} A_{R-1,n_i}(z) \label{Arec}
\end {equation}
for $R,n \geq 2$.  In terms of the generating function $G_R(z,\zeta)$, (\ref{Arec}) can be written as
\begin {equation}
 G_1(z,\zeta) = \frac{z\big(1-(1-\zeta)^a\big)}{a}
\end {equation}
for $R=1$ and
\begin {equation}
 \partial_\zeta G_R(z,\zeta) = z\left(\left(1-a G_{R-1}(z,\zeta)\right)^{\frac{a-1+\alpha}{a}}+\alpha \partial_\zeta G_{R-1}(z,\zeta)\int_0^{\zeta} \left(1-a G_{R-1}(z,\zeta')\right)^{\frac{\alpha-1}{a}}d\zeta'\right) 
\end {equation}
for $R\geq 2$. By differentiating the above with respect to $z$, rearranging and differentiating again with respect to $\zeta$ we get the following recursion
\begin {equation}
 V_1(\zeta) = \frac{1-(1-\zeta)^a}{a}
\end {equation}
and
\begin {eqnarray*}
&& (1-\zeta)V_R''(\zeta) - (1-a)V_R'(\zeta) = (1-\zeta)\big(1-(1-\zeta)^\alpha\big) V_{R-1}''(\zeta) \\
&& \qquad \qquad +~ (1-a)(2(1-\zeta)^\alpha-1)V_{R-1}'(\zeta)+a(1-a)(1-\zeta)^{\alpha-1}V_{R-1}(\zeta).
\end {eqnarray*}
Writing this in terms of the function $Q_x(\zeta)$ we find the differential equation
\begin {eqnarray*}
 &&(1-\zeta)\left(1-\frac{1}{x} - (1-\zeta)^\alpha\right) Q''_x(\zeta) + (1-a)\left(2(1-\zeta)^\alpha - 1 + \frac{1}{x}\right) Q'_x(\zeta) \\
&& +a(1-a)(1-\zeta)^{\alpha-1} Q_x(\zeta)= 0
\end {eqnarray*}
with initial conditions
\begin {equation}
 Q_x(0) = 0 \qquad \text{and} \qquad Q'_x(0) = \frac{x}{1-x}.
\end {equation}
We will solve this differential equation with straightforward methods and by plugging the solution into (\ref{Hfunc}) we can determine the function $H(x)$ and extract the coefficients of its power series.  We start by making a convenient change of variables by defining $y(\zeta) = (1-\zeta)^\alpha (1-x^{-1})^{-1}$ and  
\begin {equation} \label{Pdef}
P_x(y(\zeta)) = Q_x(\zeta)-\frac{y(\zeta)}{\alpha} + \frac{x}{\alpha(x-1)}. 
\end {equation}
Then $P_x(y)$ satisfies the following inhomogeneous, hypergeometric differential equation
\begin {eqnarray} \nonumber
&& \alpha y (1-y) P_x''(y) + (\alpha - a - (\alpha-2 a +1)y) P_x'(y) + \frac{a(1-a)}{\alpha}P_x(y) \\
&& = \frac{(\alpha-a+1)(\alpha-a)}{\alpha^2}y -\frac{\alpha-a}{\alpha} + \frac{a(1-a)x}{\alpha^2 (x-1)} \label{Pinfgen}
\end {eqnarray}
with initial conditions
\begin {equation} \label{boundinfgen}
 P_x\left(\frac{x}{x-1}\right) = P'_x\left(\frac{x}{x-1}\right) = 0.
\end {equation}
A basis of solutions to the homogeneous equation is 
\begin {equation}
 u_1(y) = (-y)^{\frac{a}{\alpha}} \quad \text{and} \quad u_2(y) = (1-y)^{-\frac{1-a}{\alpha}} {}_2F_1\left(\frac{1-a}{\alpha},1,\frac{\alpha-a}{\alpha};\frac{y}{y-1}\right).
\end {equation}
The Wronskian is given by
\begin{eqnarray*} \nonumber
W(y) &=& \left|\begin {array} {cc}    
  u_1(y) & u_2(y) \\
  u_1'(y)& u_2'(y) \\
 	\end {array}\right| 
=  \frac{a}{\alpha} (-y)^{\frac{a-\alpha}{\alpha}}(1-y)^{-\frac{1-a}{\alpha}}. 
\end{eqnarray*}
The Green's function corresponding to (\ref{Pinfgen}) is defined as
\begin{eqnarray*}
 F(y,v) &=& \frac{1}{\alpha v(1-v) W(v)}\bigg(-u_2(v)u_1(y) + u_1(v)u_2(y)\bigg). 
\end{eqnarray*}
The solution to (\ref{Pinfgen}) and (\ref{boundinfgen}) is then
\begin {equation} \label{solutiongenP}
 P_x(y) = \int_{\frac{x}{x-1}}^y F(y,v) \left( \frac{(\alpha - a +1)(\alpha-a)}{\alpha^2}v -\frac{\alpha-a}{\alpha} + \frac{a(1-a)x}{\alpha^2 (x-1)} \right) dv.
\end {equation}
Using (\ref{Hfunc}) and (\ref{Pdef}) we find that
\begin {eqnarray} \nonumber
H(x) &=& (1-\alpha) \left(\frac{1-x}{x}\right)^{\frac{\alpha-a}{\alpha}} \int_{\frac{x}{x-1}}^0 P_x(y) dy \\  \label{relationBRP}
&& +~\frac{\alpha(1-\alpha)x}{(\alpha-a)(2\alpha-a)(1-x)}.
\end {eqnarray}
Plugging the solution (\ref{solutiongenP}) into (\ref{relationBRP}) and using the formula
\begin {equation}
 \int_a^t \int_a ^x f(s) ds dx = \int_a ^t f(s) (t-s) ds
\end {equation}
we obtain, with some rewriting,
\begin {eqnarray} \nonumber
H(x) &=& (1-\alpha) \alpha^{-2}x^{\frac{a-\alpha}{\alpha}} (1-x)^{-\frac{a}{\alpha}} \int_{1-x}^1 {}_2F_1\left(\frac{1-a}{\alpha},1,\frac{\alpha+1}{\alpha};v\right) v^{\frac{a-3\alpha}{\alpha}}(1-v)^{-\frac{a}{\alpha}} \\ \nonumber
&& \times~ \Bigg[\alpha^2 x(1-x)^{\frac{a-1}{\alpha}} v^{\frac{1-a+\alpha}{\alpha}} +(1-a)(a-\alpha + \alpha x) v^{2}\\ \nonumber
&& +~ (1+\alpha-a)(a x - 2\alpha x -2 a + 2\alpha) v + (\alpha - a) (a-1-2\alpha)(1-x)\Bigg] dv \\  \label{summan}
&& +~ \frac{\alpha(1-\alpha)x}{(\alpha-a)(2\alpha-a)(1-x)}. 
\end {eqnarray}
It is possible to simplify this even further. By considering solutions for a few choices of convenient parameters (e.g.~$D=3$ or $D=\infty$, $\gamma = 1$) one immediately guesses that
\begin {equation} \label{nicesolution}
 H(x) = \frac{1-\alpha}{1-a-\alpha} (1-x)^{-1}\left({}_2 F_1\left(1,\frac{1-\alpha-a}{\alpha},\frac{\alpha-a}{\alpha};x\right)-1\right).
\end {equation}
It is then straightforward to check that this is true by rearranging and differentiating (\ref{summan}).
The result follows by expanding (\ref{nicesolution}).
\end {proof}
\begin {proof} [Proof of Theorem \ref{th:ashaus}] \label{pr:ashaus}
Let $\lambda(R)$ be a positive function with the property that
\begin {equation}
\sum_{R=1}^\infty \frac{1}{R\lambda(R)} < \infty.
\end {equation}
We will show that there exist constants $C_1$ and $C_2$ and for $\nu$--almost all trees $T$ a constant $R_T > 0$ such that
\begin {equation} \label{ieh}
 C_1 (\log(R)^{-1}R)^{1/\alpha} \leq |B_R(T)| \leq C_2(\lambda(R) R)^{1/\alpha}
\end {equation}
for all $R \geq R_T$. 
In particular we can choose $\lambda(R) = (\log(R))^\eta$ for any $\eta>1$ which is sufficient to obtain the desired result.

On an infinite, rooted tree with a single spine, denote the vertex on the spine at distance $n$ from the root by $s_n$, $n\geq1$. Let $(X_n(T))_n$ be a sequence of random variables corresponding to the number of leaves attached to the vertices $s_1,s_2,\ldots$ of $T$. Define $S_R(T) = \sum_{i=1}^R X_i(T)$. Then $|B_R(T)| = S_{R-1}(T) + R$ and it is clearly sufficient to prove the above inequalities for $S_R(T)$. Begin with the lower bound. Take $\kappa, \theta > 0$. Using Markov's inequality, the independence of the $X_i$'s and Equation (\ref{phiinfcat}) we get
\begin {eqnarray*}
\nu(S_R(T) < \kappa) &=& \nu\left(e^{-\theta S_R} > e^{-\theta\kappa}\right) \leq e^{\theta\kappa} \left(\langle e^{-\theta X_i(T)}\rangle_{\nu}\right)^R \\
&=& e^{\theta\kappa} \left(1-\left(1-e^{-\theta}\right)^\alpha\right)^R \\
&\leq& e^{\theta\kappa - R\left(1-e^{-\theta}\right)^\alpha}.
\end {eqnarray*}
Now choose $\kappa = K (\log(R))^{-1/\alpha}R^{1/\alpha}$ and $\theta = 1/\kappa$. Note that
\begin {equation}
(1-e^{-\theta})^\alpha = \theta^\alpha(1 + O(\theta)) 
\end {equation}
and therefore for $R$ large enough
\begin {equation}
\nu(S_R(T) < K (\log(R))^{-1/\gamma}R^{1/\alpha}) \leq C_3 e^{-K^{-\alpha}\log(R)} = C_3 R^{-K^{-\alpha}}
\end {equation}
where $C_3$ is a positive constant. Choosing $K = C_1$ small enough we see that 
\begin {equation}
 \sum_{R=1}^\infty  \nu(S_R(T) < C_1 (\log(R))^{-1/\gamma}R^{1/\alpha}) < \infty
\end {equation}
and therefore, by the Borel--Cantelli lemma, there exists a constant $R_T$ such that $S_R(T) \geq C_1 (\log(R))^{-1/\alpha}R^{1/\alpha}$ almost surely for all $R\geq R_T$.

The upper bound follows from Theorem 2 in \cite{feller} which states, for our purposes, that the probability of the event
\begin {equation}
 S_R(T) > a_R, \quad \quad \text{for infinitely many $R$}
\end {equation}
is zero if the sum
\begin {equation}
 \sum_{R=0}^\infty \nu(X_k \geq a_R)
\end {equation}
converges, where $a_R$ is a positive sequence with the property that $a_R/R \rightarrow \infty$ as $R\rightarrow \infty$. According to (\ref{phiinfcat})
\begin {equation}
 \nu(X_k \geq a_R) \leq C_4 a_R^{-\alpha}
\end {equation}
for a suitable constant $C_4$. Choosing $a_R = (\lambda(R)R)^{1/\alpha}$, where $\lambda(R)$ has the properties stated above, completes the proof.
\end {proof} 

\begin {proof} [Proof of Theorem \ref{th:spec}] \label{pr:spec}
For a tree $T$, define the generating function
\begin {equation}
 Q_T(x) = \sum_{t=0}^\infty p_T(t) (1-x)^{\frac{t}{2}}.
\end {equation}
From  \cite[Lemma 7 and Equation (6)]{sdgt} we have the following inequalities for any fixed caterpillar $T$ and any integer $R>0$
\begin {equation}
 \frac{1}{\frac{1}{R-1}+x+x|B_R(T)|}\leq Q_T(x) \leq R + \frac{2}{x|B_R(T)|}.
\end {equation}
Using (\ref{ieh}) for a suitable choice of $\lambda(R)$ we get $\nu$--almost surely the inequality
\begin {equation}
\frac{1}{\frac{1}{R-1}+x+x(\lambda(R) R)^{1/\alpha}}\leq Q_T(x) \leq R + \frac{2}{xC_1 (\log(R)^{-1}R)^{1/\alpha} }
\end {equation}
for all $R\geq R_T$ and $R_T$ large enough. Choosing $R=[x^{-\frac{\alpha}{1+\alpha}}]$ we find that there are numbers $K_1(T)$ and $K_2(T)$ such that $\nu$--almost surely
\begin {equation}
K_1(T)\lambda([x^{-\frac{\alpha}{1+\alpha}}])^{-1}x^{-\frac{\alpha}{1+\alpha}}\leq Q_T(x) \leq K_2(T) \log([x^{-\frac{\alpha}{1+\alpha}}])x^{-\frac{\alpha}{1+\alpha}}.
\end {equation}
This yields the desired limit.
\end {proof}
\begin {thebibliography}{99}
%
\bibitem{abram} M. Abramowitz and I.~A.~Stegun, {\it Handbook of mathematical functions, with formulas, graphs, and mathematical tables.}, Dover publications, 1974.

\bibitem{albert0} Albert, R. and Barabási, A.-L., {\it Statistical mechanics of complex networks,} Rev. Mod. Phys. {\bf 74} (2002) 47.

\bibitem{aldousoriginal} D.~Aldous, {\it The continuum random tree I}, Ann.~Probab., 19 (1991), pp. 1-28.

\bibitem{aldous} D. Aldous, {\it Probability distributions on cladograms. In Random Discrete Structures (Minneapolis, MN, 1993}, volume 76 of {\it Vol. Math. Appl.}, pages 1-18. Springer, New York, 1996.

\bibitem{book} J. Ambj\o rn, B. Durhuus and T. Jonsson, {\it Quantum geometry: a statistical field theory approach,} Cambridge University Press, Cambridge (1997).	


\bibitem{barlow:2005} M.~T.~Barlow, T.~Coulhon and T.~Kumagai, {\it Characterization of sub--Gaussian heat kernel estimates on strongly recurrent graphs}, Comm.~Pure Appl.~ Math.~{\bf 58} (2005), 1642 -- 1677.

\bibitem{billingsley}P. Billingsley. {\it Convergence of probability measures}, John Wiley and
Sons, 1968.

\bibitem{alphagamma}B. Chen, D. Ford and M. Winkel, {\it A new family of Markov branching
trees, the alpha-gamma model}, Electronic Journal of Probability, 14 (2009), paper 15. 

\bibitem{vs}F. David, M. Dukes, T. Jonsson and S. Ö. Stefánsson,
{\it Random tree growth by vertex splitting}, J.~Stat.~Mech.~(2009), no. 4, P04009.

\bibitem{massdist}F. David, P. Di Francesco, E. Guitter and T. Jonsson, {\it Mass
distribution exponents for growing trees}, J. Stat. Mech. (2007) P02011.

\bibitem{RNAfolding} F. David, C. Hagendorf and K.~J.~Wiese, {\it A growth model for RNA secondary structures}, J. Stat. Phys. 128 (2007) P02011.

\bibitem{bergfinnur}B. Durhuus. {\it Probabilistic aspects of infinite trees and surfaces}, Acta
Physica Polonica B 34 (Oct. 2003) 4795.

\bibitem{specCDT} B.~Durhuus, T.~Jonsson and J.~Wheater, {\it On the spectral dimension of causal triangulations,} J.~Stat.~Phys.~139 (2010), 859-881. 

\bibitem{sdgt}B. Durhuus, T. Jonsson and J. Wheater, {\it The spectral dimension of
generic trees}, J. Stat. Phys. 128 (2007) 1237-1260.

\bibitem{combs}B. Durhuus, T. Jonsson and J. Wheater, {\it Random walks on combs}, J.~Phys., A39 (2006), pp. 1009--1038.

\bibitem{feller} W.~Feller, {\it A limit theorem for random variables with infinite moments}, American Journal of Mathematics, 68 (1946), pp.~257-262.

\bibitem{Ford}D. J. Ford, {\it Probabilities on cladograms: introduction to the alpha
model}, Preprint, arXiv:math.PR/0511246.

\bibitem{markovss}B. Haas, G. Miermont, {\it Scaling limits of Markov branching trees, with applications to Galton-Watson and random unordered trees}, Preprint, 	arXiv:1003.3632v2.

\bibitem{continuum} B.~Haas, G.~Miermont, J.~Pitman and M.~Winkel, {\it Continuum tree asymptotics of discrete fragmentations and applications to phylogenetic models}, Annals of Probability 2008, Vol. 36, No. 5, 1790-1837.

\bibitem{svante} S.~Janson, {\it Asymptotic degree distribution in random recursive trees}, Random Structures and Algorithms, 26 (2005), pp.~69.83.

\bibitem{ngtrees} T.~Jonsson and S.~O.~Stefansson, {\it Condensation in nongeneric trees}, Journal of Statistical Physics: 142, 2 (2011), 277.  

\bibitem{brushes} T.~Jonsson and S.~O.~Stefansson, {\it The spectral dimension of random brushes}, J.~Phys.~A: Math.~Theor.~, 41 (2008), p.~045005.

\bibitem{jonsson:2009}
\leavevmode\vrule height 2pt depth -1.6pt width 23pt, {\em Appearance of
  vertices of infinite order in a model of random trees}, J. Phys. A: Math.
  Theor., 42 (2009), p.~485006.

\bibitem{krikun} M.~Krikun, {\it Uniform Infinite Planar Triangulation and Related Time-Reversed Critical Branching Process}, J.~Math.~Sci. 131 (2005), pp.~5520–5537.

\bibitem{legalllimit} J.-F.~Le Gall, {\it The topological structure of scaling limits of large planar maps,} Invent. Math., 169, (2007), pp. 621-670.

 \bibitem{rapoport} N.S.~Na and A.~Rapoport, {\it Distribution of nodes of a tree by degree}, Math.~Biosci.~6
(1970), 313–329.

\bibitem{newman} M.~E.~J.~Newman, Assortative mixing in networks, Phys.~Rev.~Lett., 89 (2002), p.~208701.

\bibitem{siggi} S.~O.~Stefansson, {\it The infinite volume limit of Ford's alpha model}, Acta Physica Polonica B Proceedings Supplement, 2 (2009) no. 3, 555-562.

\bibitem{phdsos} S.~O.~Stefansson, {\it Topics in random tree theory}, Ph.D.~thesis, University of Iceland, 2010. Available at:  http://raunvis.hi.is/$\sim$sigurdurorn/files/PHDSOS.pdf.

\end {thebibliography}

\end{document}